\let\MYcaption\@makecaption
\newcommand{\tr}{\text{\textnormal{tr}}}
\newcommand{\vect}{\text{\textnormal{vec}}}
\newcommand{\diag}{\text{\textnormal{diag}}}
\newcommand{\etal}{\textit{et.al. }}
\newtheorem{definition}{Definition}
\newtheorem{proposition}{Proposition}
\newtheorem{remark}{Remark}
\newtheorem{lemma}{Lemma}
\newtheorem{theorem}{Theorem}
\newtheorem{corollary}{Corollary}
\begin{document}
%
\title{Mean Square Stability for Stochastic Jump Linear Systems via Optimal Transport}
\author{Kooktae Lee, Abhishek Halder, and Raktim Bhattacharya
\thanks{The authors are with the Department of Aerospace Engineering, Texas A\&M University, College Station, TX 77843-3141
USA. Their respective email addresses are animodor@tamu.edu, ahalder@tamu.edu, and raktim@tamu.edu. This research was supported through National Science Foundation
award \#1016299, with Dr. Helen Gill as the program manager.}}
\maketitle

\begin{abstract}
In this note, we provide a unified framework for the mean square stability of stochastic jump linear systems via optimal transport. The Wasserstein metric known as an optimal transport, that assesses the distance between probability density functions enables the stability analysis. Without any assumption on the underlying jump process, this Wasserstein distance guarantees the mean square stability for general stochastic jump linear systems, not necessarily for Markovian jump. The validity of the proposed methods are proved by recovering already-known stability conditions under this framework.
\end{abstract}

\begin{IEEEkeywords}
Stochastic jump linear systems, mean square stability, Wasserstein distance
\end{IEEEkeywords}

%
\IEEEpeerreviewmaketitle

\section{Introduction}
Consider a discrete-time jump linear system given below.
\begin{equation}
x(k+1) = A_{\sigma_k}x(k)\label{eqn:1},\\
\end{equation}
where $\{\sigma_k\}$ denotes the switching sequence of the jump system.  This system (\ref{eqn:1}) with $m$ modes is characterized by (i) a set of $m$ system matrices $\{A_{i}\}_{i=1}^{m}$, and (ii) a switching sequence $\{\sigma_k\}$. If the switching sequence $\{\sigma_k\}$ is governed by the time-varying occupation probability vector $\pi\left(k\right) \triangleq \{\pi_{1}\left(k\right), \pi_{2}\left(k\right), \hdots, \pi_{m}\left(k\right)\}$, then \eqref{eqn:1} is referred as stochastic jump linear systems. In general, there are several difficulties on defining the conditions for the stability of such stochastic jump linear systems and main difficulty stems from the randomness in switching logic. Therefore, a variety of researches have been investigated for the stability analysis of stochastic jump linear systems.

Kozin \cite{kozin1969survey} surveyed some basic ideas for the stability of stochastic systems and Feng \etal\cite{feng1992stochastic} showed equivalence of different notions in mean square stabilities. In \cite{ji1991stability}, Ji \etal studied the stability for the discrete-time jump linear systems in the mean square sense. One example of such stochastic jump systems is Markov jump linear systems, where the switching probability $\pi(k)$ is governed by Markovian process. These Markov jump linear systems are widely adopted due to the usefulness of formulating randomness caused by communication delays 
 or packet losses 
 in networked systems, abrupt environmental disturbances or changes in subsystems
 , systems with parametric uncertainties
 , etc. Some recent literatures for the stability analysis of Markov jump linear systems can be found in \cite{de2006robust,shi2009output,zhang2008analysis,bolzern2010markov}.

In this note, we provide the mean square stability conditions for general stochastic jump linear systems, but not necessarily for Markovian jump. Hence, the switching probability $\pi(k)$ forms any random vector. The Wasserstein distance, which defines a metric on the manifold of probability density functions(PDFs) provides a distance between a time-varying state PDF and a reference PDF. As a consequence, the stability in terms of the distributional sense can be obtained using the Wasserstein metric. Moreover, we show that the convergence in the Wasserstein distance with Dirac as a reference PDF implies the stability in the mean square sense. Compared to the previous literatures investigated on the mean square stability of stochastic jump linear systems, the major contributions of this note can be listed as follows:
1) Using the Wasserstein distance known as an optimal transport, a \textit{unified framework} to prove the mean square stability of any stochastic jump linear systems is presented. We show that any arbitrary initial state distributions can be represented by mixture of Gaussian(MoG), then the convergence of this MoG implies the mean square stability in the Wasserstein framework.
2) Some stability conditions for stochastic jump linear systems in the previous literatures such as i) Independent and identically distributed(i.i.d.) jump, ii) Time-homogeneous Markov jump, can be fully recovered by the proposed methods. This work will show how the mean square stability conditions are established in the PDF level using an optimal transport.

\textbf{Notation:} Most notations are standard. $\mathbb{R}^{n}$ denotes the $n$-dimensional Euclidean space. The notations $\tr\left(\cdot\right)$, $\vect(\cdot)$, and $\diag(\cdot)$ represent trace, vectorization, and block diagonalization operators, respectively. Abbreviation m.s. stands for the \textit{asymptotic} convergence in the mean square sense. The notation $X \sim \varsigma\left(x\right)$ denotes that the random vector $X$ has probability density function (PDF) $\varsigma\left(x\right)$. The symbol $\mathcal{N}\left(\mu,\Sigma\right)$ is used to denote the PDF of a Gaussian random vector with mean $\mu$ and covariance $\Sigma$. In addition, the symbol $\rho(\cdot)$ and $\lambda_i(\cdot)$ represent the spectral radius and $i^{th}$ eigenvalue of the square matrix, respectively.


\section{Some New Results on Wasserstein Distance and PDF Evolution in Stochastic Jump Linear Systems}
\begin{definition} (\textbf{Wasserstein distance})
Consider the vectors $x_{1} \in \mathcal{X}_{1} \subseteq \mathbb{R}^{n}$, and $x_{2} \in \mathcal{X}_{2} \subseteq \mathbb{R}^{n}$, such that $x_{1} \sim \varsigma_{1}$ and $x_{2} \sim \varsigma_{2}$. Let $\mathcal{P}_{2}(\varsigma_{1},\varsigma_{2})$ denote the collection of all probability measures $\varsigma$ supported on the product space $\mathcal{X}_{1}\times\mathcal{X}_{2} \subseteq \mathbb{R}^{2n}$, having finite second moment, with first marginal $\varsigma_{1}$ and second marginal $\varsigma_{2}$. Then the $L_{2}$ \emph{Wasserstein distance of order} 2, denoted as $_{2}W_{2}$, between two $n$-variate PDFs $\varsigma_{1}$ and $\varsigma_{2}$, is defined as
\small
\begin{align}\label{Wassdefn}
&_{2}W_{2}(\varsigma_{1},\varsigma_{2}) \triangleq\nonumber\\
&\left(\displaystyle\inf_{\varsigma\in\mathcal{P}_{2}(\varsigma_{1},\varsigma_{2})}\displaystyle\int_{\mathcal{X}_{1}\times\mathcal{X}_{2}} \parallel x_{1}-x_{2}\parallel_{\ell_{2}\left(\mathbb{R}^{n}\right)}^{2} \: \varsigma\left(x_{1},x_{2}\right) \: dx_{1} dx_{2} \right)
^{\frac{{1}}{2}}.
\end{align}
\normalsize
\end{definition}

\begin{remark}
Intuitively, Wasserstein distance equals the \emph{least amount of work} needed to morph one distributional shape to the other \cite{villani2003topics}. From this point on, we denote $_{2}W_{2}$ as $W$, for notational ease. One can prove (p. 208, \cite{villani2003topics}) that $W$ defines a metric on the manifold of PDFs.
\label{WassRemarkFirst}
\end{remark}

Next, we connect the distributional convergence in Wasserstein metric, with the convergence in m.s. sense. For this purpose, we consider Dirac delta as a generalized PDF, formally stated below.

\begin{definition}\label{DiracPDFdefn} (\textbf{Dirac delta as generalized PDF})
In this note, we consider the Dirac delta function $\delta\left(x\right)$, where $x \in \mathbb{R}^{n}$, defined as
\begin{eqnarray*}
\delta\left(x\right) = \begin{cases}
+\infty, & x = 0,\\
0, &  x \neq 0,
\end{cases}
\end{eqnarray*}
as a generalized PDF, since $\delta\left(x\right)$ is nonnegative for all $x \in \mathbb{R}^{n}$, and $\displaystyle\int_{\mathbb{R}^{n}} \delta\left(x\right) = 1$. This generalized PDF has measure zero support, and the corresponding cumulative distribution function (CDF) is the Heaviside (unit step) function. In particular, we note that $\delta\left(x\right) = \displaystyle\lim_{S \rightarrow 0} \mathcal{N}\left(0, S\right)$ (see e.g., p. 160-161,\cite{hassani2013mathematical}).
\end{definition}

\begin{definition}\label{MSconvergencedefn} (\textbf{Mean square convergence})
A sequence of random vectors $\{X_{j}\}_{j=1}^{\infty}$ with $X_{j} \in \mathbb{R}^{n}$, is said to converge to a random vector $X \in \mathbb{R}^{n}$ in \emph{mean-square sense}, if
\begin{eqnarray*}
\displaystyle\lim_{j\rightarrow\infty}\mathbb{E}\left[\parallel X_{j} - X \parallel_{\ell_{2}\left(\mathbb{R}^{n}\right)}^{2}\right] = 0. \:\text{In short, we write $X_{j} \overset{\text{m.s.}}\longrightarrow X$}.
\end{eqnarray*}
\end{definition}

\begin{proposition}\label{m.s.stable} (\textbf{Convergence to $\delta\left(x\right)$ in $W$ metric})
If we fix Dirac delta as the reference PDF, then distributional convergence in Wasserstein metric is \emph{necessary and sufficient} for convergence in m.s. sense.\label{WConvMeanSqConvProposition}
\end{proposition}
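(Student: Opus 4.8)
The plan is to exploit the fact that fixing the Dirac delta at the origin as the reference PDF collapses the optimal transport problem in \eqref{Wassdefn} into a trivial one. Write $\varsigma_j$ for the PDF of $X_j$. First I would observe that any joint measure $\varsigma \in \mathcal{P}_{2}(\varsigma_j,\delta)$ must have second marginal equal to $\delta$; since $\delta$ places all its (unit) mass at $x_{2}=0$, every such $\varsigma$ is supported on the hyperplane $\{x_{2}=0\}$, and disintegrating $\varsigma$ against its deterministic second marginal forces $\varsigma(x_{1},x_{2}) = \varsigma_{j}(x_{1})\,\delta(x_{2})$. Hence $\mathcal{P}_{2}(\varsigma_{j},\delta)$ is the singleton $\{\varsigma_{j}\otimes\delta\}$ and the infimum is attained trivially by the unique admissible coupling.

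Second, I would simply evaluate $W$ on this unique coupling:
\[
W(\varsigma_{j},\delta)^{2} = \int_{\mathcal{X}_{1}\times\mathcal{X}_{2}} \|x_{1}-x_{2}\|_{\ell_{2}(\mathbb{R}^{n})}^{2}\, \varsigma_{j}(x_{1})\,\delta(x_{2})\, dx_{1}\,dx_{2} = \int_{\mathbb{R}^{n}} \|x_{1}\|_{\ell_{2}(\mathbb{R}^{n})}^{2}\, \varsigma_{j}(x_{1})\, dx_{1} = \mathbb{E}\!\left[\|X_{j}\|_{\ell_{2}(\mathbb{R}^{n})}^{2}\right].
\]
This identity $W(\varsigma_{j},\delta) = \sqrt{\mathbb{E}\!\left[\|X_{j}\|^{2}\right]}$ is the crux of the argument. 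Both directions of the proposition then follow at once: $W(\varsigma_{j},\delta)\to 0$ if and only if $\mathbb{E}\!\left[\|X_{j}-0\|^{2}\right]\to 0$, which by Definition \ref{MSconvergencedefn} applied with limiting random vector $X=0$ is precisely $X_{j}\overset{\text{m.s.}}{\longrightarrow}0$. Finiteness of $\mathbb{E}\!\left[\|X_{j}\|^{2}\right]$ (equivalently, $\varsigma_{j}$ having finite second moment) is already built into the definition of $\mathcal{P}_{2}$, so no extra integrability hypothesis is needed.

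I expect the only delicate point — and hence the "main obstacle" — to be expository rather than analytic: since $\delta$ is a \emph{generalized} PDF, the uniqueness-of-coupling / product-measure step is cleanest when phrased at the level of probability measures rather than densities. As an alternative that avoids this subtlety entirely, I would invoke $\delta(x) = \lim_{S\rightarrow 0}\mathcal{N}(0,S)$ from Definition \ref{DiracPDFdefn}, use the closed-form Wasserstein distance between Gaussians to obtain $W(\varsigma_{j},\mathcal{N}(0,S))^{2}\rightarrow \mathbb{E}\!\left[\|X_{j}\|^{2}\right]$ as $S\rightarrow 0$, and pass to the limit using continuity (lower semicontinuity) of $W$ in its arguments. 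Either route yields the same reduction of $W(\varsigma_{j},\delta)$ to the second moment, from which the claimed equivalence is immediate.
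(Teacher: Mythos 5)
Your proof is correct and follows essentially the same route as the paper: both arguments hinge on the observation that the coupling set $\mathcal{P}_{2}(\varsigma_{j},\delta)$ is a singleton (the product measure), which collapses the infimum and yields the identity $W^{2}(\varsigma_{j},\delta)=\mathbb{E}\left[\parallel X_{j}\parallel_{\ell_{2}\left(\mathbb{R}^{n}\right)}^{2}\right]$, from which both directions of the equivalence are immediate. Your disintegration justification of the uniqueness of the coupling is slightly more explicit than the paper's, and you read the converse off the same identity rather than citing the standard implication, but these are presentational differences only.
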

\begin{proof}
Consider a sequence of $n$-dimensional joint PDFs $\{\varsigma_{j}\left(x\right)\}_{j=1}^{\infty}$, that converges to $\delta\left(x\right)$ in distribution, i.e., $\displaystyle\lim_{j\rightarrow\infty} W\left(\varsigma_{j}(x), \delta(x)\right) = 0$. From (\ref{Wassdefn}), we have
{\small
\begin{align}
&\displaystyle \underset{\varsigma\in\mathcal{P}_{2}(\varsigma_{j}(x),\delta(x))}{W^{2}\left(\varsigma_{j}(x), \delta(x)\right)=\inf\mathbb{E}\left[\parallel X_{j} - 0 \parallel_{\ell_{2}\left(\mathbb{R}^{n}\right)}^{2}\right]} = \mathbb{E}\left[\parallel X_{j} \parallel_{\ell_{2}\left(\mathbb{R}^{n}\right)}^{2}\right],\label{m.s.stab}
\end{align}}
where the random vector $X_{j} \sim \varsigma_{j}\left(x\right)$. The last equality follows from the fact that $\mathcal{P}_{2}(\varsigma_{j}(x),\delta(x))$ $ = \{\varsigma_{j}(x)\}$ $\forall \: j$, thus obviating the infimum. From \eqref{m.s.stab}, $\displaystyle\lim_{j\rightarrow\infty} W\left(\varsigma_{j}(x), \delta(x)\right) = 0 \Rightarrow \displaystyle\lim_{j\rightarrow\infty} \mathbb{E}\left[\parallel X_{j} \parallel_{\ell_{2}}^{2}\right] $ $= 0$, establishing distributional convergence to $\delta(x) \Rightarrow$ m.s. convergence. Conversely, m.s. convergence $\Rightarrow$ distributional convergence, is well-known \cite{grimmett2001probability} and unlike the other direction, holds for arbitrary reference measure.
\end{proof}

The next result quantifies the Wasserstein distance between a Gaussian and Dirac PDF, in terms of the parameters of the Gaussian PDF.

\begin{proposition}\label{prop:Gauss_to_Dirac}($W$ \textbf{between Gaussian and Dirac PDF})
The Wasserstein distance $W$, between two joint PDFs $\mathcal{N}\left(\mu,\Sigma\right)$ and $\delta\left(x\right)$, both supported on $\mathbb{R}^{n}$, is given by
\begin{eqnarray}\label{staticW}
W\left(\mathcal{N}\left(\mu,\Sigma\right), \delta\left(x\right)\right) = \sqrt{\parallel \mu \parallel_{\ell_{2}\left(\mathbb{R}^{n}\right)}^{2} + \: \text{tr}\left(\Sigma\right)}.
\label{GaussianDiracW}
\end{eqnarray}
\end{proposition}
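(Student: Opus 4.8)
The plan is to reuse the structural observation already exploited in the proof of Proposition~\ref{m.s.stable}: when one of the two marginals is the Dirac delta $\delta(x)$, the set of admissible couplings $\mathcal{P}_{2}(\mathcal{N}(\mu,\Sigma),\delta(x))$ collapses to a single element. Indeed, any joint measure on $\mathbb{R}^{n}\times\mathbb{R}^{n}$ whose second marginal is the unit mass at the origin must be supported on the hyperplane $\{x_{2}=0\}$, hence is completely determined by its first marginal $\mathcal{N}(\mu,\Sigma)$. Consequently the infimum in \eqref{Wassdefn} is vacuous, and
\begin{align*}
W^{2}\!\left(\mathcal{N}(\mu,\Sigma),\delta(x)\right)
= \mathbb{E}\!\left[\parallel X - 0 \parallel_{\ell_{2}(\mathbb{R}^{n})}^{2}\right]
= \mathbb{E}\!\left[\parallel X \parallel_{\ell_{2}(\mathbb{R}^{n})}^{2}\right],
\end{align*}
where $X \sim \mathcal{N}(\mu,\Sigma)$. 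This is exactly the specialization of \eqref{m.s.stab} to the Gaussian case.

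The second step is a routine second-moment computation. Writing $\parallel X \parallel_{\ell_{2}}^{2} = X^{\top}X = \tr(XX^{\top})$ and using linearity together with the cyclic/linearity property of the trace, I get $\mathbb{E}[\parallel X\parallel_{\ell_{2}}^{2}] = \tr\!\left(\mathbb{E}[XX^{\top}]\right)$. Since $X$ is Gaussian with mean $\mu$ and covariance $\Sigma$, we have $\mathbb{E}[XX^{\top}] = \Sigma + \mu\mu^{\top}$, so $\mathbb{E}[\parallel X\parallel_{\ell_{2}}^{2}] = \tr(\Sigma) + \tr(\mu\mu^{\top}) = \tr(\Sigma) + \parallel \mu\parallel_{\ell_{2}(\mathbb{R}^{n})}^{2}$. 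Taking the nonnegative square root yields \eqref{GaussianDiracW}.

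I would also note, as a sanity check / alternative derivation, that one can instead start from the classical closed-form Wasserstein distance between two Gaussians $\mathcal{N}(\mu_{1},\Sigma_{1})$ and $\mathcal{N}(\mu_{2},\Sigma_{2})$, namely $\parallel\mu_{1}-\mu_{2}\parallel^{2} + \tr\!\left(\Sigma_{1}+\Sigma_{2}-2(\Sigma_{2}^{1/2}\Sigma_{1}\Sigma_{2}^{1/2})^{1/2}\right)$, set $\mu_{2}=0$, $\Sigma_{2}=S$, and let $S\to 0$, invoking $\delta(x)=\lim_{S\to0}\mathcal{N}(0,S)$ from Definition~\ref{DiracPDFdefn} together with continuity of $W$; the cross term vanishes and one recovers the same expression.

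There is essentially no hard step here: the only point requiring a word of justification is the degeneracy of the coupling set for the generalized PDF $\delta(x)$, and that has already been argued in Proposition~\ref{m.s.stable}; everything else is the elementary identity $\mathbb{E}[XX^{\top}]=\Sigma+\mu\mu^{\top}$ for Gaussians plus properties of the trace.
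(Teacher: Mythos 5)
Your proof is correct, but it takes a genuinely different route from the paper's. The paper proves this proposition by writing $\delta(x)=\lim_{S\to 0}\mathcal{N}(0,S)$ (Definition~\ref{DiracPDFdefn}) and then invoking the closed-form Givens--Shortt expression for the Wasserstein distance between two Gaussians, $\parallel\mu\parallel^{2}+\tr\bigl(\Sigma+S-2(\sqrt{\Sigma}S\sqrt{\Sigma})^{1/2}\bigr)$, letting $S\to0$ so that the cross term vanishes --- i.e., exactly your ``sanity check'' derivation, promoted to the main argument. Your primary argument instead reuses the coupling-degeneracy observation from Proposition~\ref{m.s.stable}: any admissible joint measure with second marginal $\delta$ is supported on $\{x_{2}=0\}$ and is determined by its first marginal, so the infimum in \eqref{Wassdefn} is vacuous and $W^{2}=\mathbb{E}[\parallel X\parallel^{2}]=\tr(\Sigma)+\parallel\mu\parallel^{2}$. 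This buys you two things: it is more elementary (no appeal to the Gaussian--Gaussian closed form from \cite{givens1984class}), and it is strictly more general, since $\mathbb{E}[XX^{\top}]=\Sigma+\mu\mu^{\top}$ holds for any random vector with finite second moments --- Gaussianity is never used, so you have in effect shown $W(\varsigma,\delta)^{2}=\parallel\mu\parallel^{2}+\tr(\Sigma)$ for an arbitrary $\varsigma$ with mean $\mu$ and covariance $\Sigma$, which is the fact secretly underpinning Theorem~\ref{theorem:2.3}. It also sidesteps the (unstated in the paper) continuity-of-$W$ argument needed to justify pulling the limit $S\to0$ inside the distance. The paper's route, in exchange, is a one-line application of a citable formula. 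Both are valid; yours is arguably the cleaner of the two.
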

\begin{proof}
Following definition \ref{DiracPDFdefn}, we write
\small{
\begin{align*}
&W^{2}\left(\mathcal{N}\left(\mu,\Sigma\right), \delta\left(x\right)\right) = \displaystyle\lim_{S\rightarrow0} W^{2}\left(\mathcal{N}\left(\mu,\Sigma\right), \mathcal{N}\left(0,S\right)\right)\\
&= \displaystyle\lim_{S\rightarrow0} \left(\parallel \mu  - 0 \parallel_{\ell_{2}\left(\mathbb{R}^{n}\right)}^{2} + \: \text{tr}\left(\Sigma + S - 2\left(\sqrt{\Sigma}S\sqrt{\Sigma}\right)^{1/2}\right)\right)\\
&= \parallel \mu \parallel_{\ell_{2}\left(\mathbb{R}^{n}\right)}^{2} + \: \text{tr}\left(\Sigma\right),
\end{align*}}
\normalsize
where in the second step, we used the closed-form solution \cite{givens1984class} for Wasserstein distance between two Gaussian PDFs. Hence the result.
\end{proof}

\begin{lemma}\label{lemma:2.1}
Given $m$ absolutely continuous random vectors $X_{1}, \hdots, X_{m}$, with respective CDF $F_{j}\left(x\right)$, and PDF $\varsigma_{j}\left(x\right)$, where $j = 1, 2,\hdots, m$, and $x\in\mathbb{R}^{n}$, let $X \triangleq X_{j}$ with probability $\alpha_{j} \in [0,1]$, $\displaystyle\sum_{i=1}^{m} \alpha_{j} = 1$. Then, the CDF and PDF of the $n$-dimensional random vector $X$ are given by
{\small
\begin{equation}
F\left(x\right) = \displaystyle\sum_{j=1}^{m} \alpha_{j} F_{j}\left(x\right),\quad\varsigma\left(x\right) = \displaystyle\sum_{j=1}^{m} \alpha_{j} \varsigma_{j}\left(x\right)\label{eqn:lemma3.1}.
\end{equation}}
\label{RandomVarProposition}
\end{lemma}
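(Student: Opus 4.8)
The plan is to make the informal mixture prescription precise by introducing an auxiliary selector variable, and then to derive the CDF identity by the law of total probability and the PDF identity by differentiation. First I would adjoin to the probability space a discrete random variable $\Theta$, independent of $X_{1},\hdots,X_{m}$, with $\mathbb{P}(\Theta = j) = \alpha_{j}$ for $j = 1,\hdots,m$; the statement ``$X \triangleq X_{j}$ with probability $\alpha_{j}$'' is then formalized as $X = X_{\Theta}$. This formalization is the step that genuinely needs attention, because the raw description does not by itself pin down $X$ as a random vector on a common space; everything after it is routine.

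Next I would condition on $\Theta$. For $x \in \mathbb{R}^{n}$, with the vector inequality read componentwise,
\begin{align*}
F(x) = \mathbb{P}(X \le x) &= \sum_{j=1}^{m} \mathbb{P}(X \le x \mid \Theta = j)\,\mathbb{P}(\Theta = j)\\
&= \sum_{j=1}^{m} \alpha_{j}\,\mathbb{P}(X_{j} \le x) = \sum_{j=1}^{m} \alpha_{j} F_{j}(x),
\end{align*}
where the second equality is the law of total probability and the third uses the independence of $\Theta$ from $\{X_{j}\}_{j=1}^{m}$ together with the fact that $X = X_{j}$ on the event $\{\Theta = j\}$. This establishes the CDF claim.

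For the density, I would observe that a finite convex combination of absolutely continuous CDFs is again absolutely continuous, so $F$ possesses a density, obtained by applying the mixed partial derivative $\partial^{n}/(\partial x_{1}\cdots\partial x_{n})$ termwise to the finite sum; since $\partial^{n} F_{j}/(\partial x_{1}\cdots\partial x_{n}) = \varsigma_{j}$ almost everywhere by hypothesis, we get $\varsigma(x) = \sum_{j=1}^{m} \alpha_{j}\,\varsigma_{j}(x)$. I would close by checking that this is a legitimate PDF: $\varsigma \ge 0$ since each $\alpha_{j} \ge 0$ and each $\varsigma_{j} \ge 0$, and $\int_{\mathbb{R}^{n}}\varsigma(x)\,dx = \sum_{j=1}^{m}\alpha_{j}\int_{\mathbb{R}^{n}}\varsigma_{j}(x)\,dx = \sum_{j=1}^{m}\alpha_{j} = 1$. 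No real obstacle arises; the only point meriting a word of justification is interchanging the derivative with the summation, which is immediate for finite sums.
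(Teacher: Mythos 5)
Your proposal is correct and follows essentially the same route as the paper: the law of total probability conditioned on which component is selected gives the CDF identity, and differentiation (using absolute continuity) gives the PDF identity. Your explicit introduction of the independent selector variable $\Theta$ merely makes rigorous what the paper writes informally as $\mathbb{P}(X = X_j)$, so the arguments coincide.
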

\begin{proof}
{\small$F\left(x\right) \triangleq \mathbb{P}\left(X \leq x\right) = \displaystyle\sum_{j=1}^{m} \mathbb{P}\left(X=X_{j}\right) \mathbb{P}\left(X_{j} \leq x\right) = \displaystyle\sum_{j=1}^{m} \alpha_{j} F_{j}\left(x\right)$}
where we have used the law of total probability. Since each $X_{j}$ and hence $X$, is absolutely continuous, we have $\varsigma\left(x\right) =  \sum_{j=1}^{m} \alpha_{j} \varsigma_{j}\left(x\right)$.
\end{proof}
A consequence of Lemma \ref{lemma:2.1} is that the joint state PDF of stochastic jump systems is of mixture type, namely a convex sum of component PDFs. In particular, the following proposition provides a closed form formula of the joint state PDF evolution for stochastic jump \emph{linear} systems, under the assumption that the initial PDF is an MoG. This assumption on initial joint PDF is not too restrictive, since any arbitrary initial PDF can be approximated, in weak distributional sense, by a finite MoG \cite{bacharoglou2010approximation}.

\begin{proposition} (\textbf{Joint state PDF of stochastic jump linear systems at time $k$})\label{SJLSpdfAtAnyTimeWithInitialMOG}
Consider a discrete-time stochastic jump linear system with the initial joint state PDF $\varsigma_{0}$ being an MoG  with $m_{0}$ component Gaussians, i.e., $\varsigma_{0} =  \sum_{j_{0}=1}^{m_{0}} \alpha_{j_{0}} \:\mathcal{N}\left(\mu_{j_{0}},\Sigma_{j_{0}}\right)$, $ \sum_{j_{0}=1}^{m_{0}} \alpha_{j_{0}} = 1$. Then, the joint state PDF at time $k$, denoted by $\varsigma\left(k\right)$, under stochastic jumps with switching probability $\pi(k)$, is given as
{\small
\begin{align}
&\varsigma\left(k\right) = \displaystyle\sum_{j_{k}=1}^{m}\displaystyle\sum_{j_{k-1}=1}^{m} \hdots \displaystyle\sum_{j_{1}=1}^{m} \displaystyle\sum_{j_{0}=1}^{m_{0}} \left(\prod_{r=1}^{k} \pi_{j_{r}}\left(r\right)\right)  \, \nonumber\\
&\qquad\qquad\qquad\qquad\qquad\alpha_{j_{0}}\mathcal{N}\left(A_{j_k}^*\mu_{j_{0}}, A_{j_k}^*\Sigma_{j_{0}}A_{j_k}^{*{\top}}\right),
\label{dtSJLSstatePDF}
\end{align}}
where {\small$\displaystyle A_{j_{k}}^*\triangleq \prod_{r=k}^{1}A_{j_r}=A_{j_k}A_{j_{k-1}}\hdots A_{j_{2}}A_{j_{1}}$}.
\label{UncPropThmdtSJLS}
\end{proposition}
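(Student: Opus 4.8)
The plan is to argue by induction on the time index $k$, using the one-step propagation of an MoG through a linear map as the engine. For the base case $k=0$, the nested sums in \eqref{dtSJLSstatePDF} collapse to the single sum over $j_{0}$, the product $\prod_{r=1}^{0}\pi_{j_{r}}(r)$ is empty and hence equals $1$, and $A_{j_{0}}^{*}$ is the empty matrix product and hence the identity; the right-hand side therefore reduces to $\varsigma_{0}=\sum_{j_{0}=1}^{m_{0}}\alpha_{j_{0}}\mathcal{N}(\mu_{j_{0}},\Sigma_{j_{0}})$, which is the prescribed initial joint PDF.

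For the inductive step, assume $\varsigma(k)$ has the form \eqref{dtSJLSstatePDF}. I would first record two elementary facts about linear pushforwards of PDFs on $\mathbb{R}^{n}$: (i) if $X\sim\mathcal{N}(\mu,\Sigma)$ then $AX\sim\mathcal{N}(A\mu,A\Sigma A^{\top})$; and (ii) the density of $AX$ when $X\sim\sum_{j}\alpha_{j}\varsigma_{j}$ equals $\sum_{j}\alpha_{j}\,\big(\text{density of }AX\text{ when }X\sim\varsigma_{j}\big)$, since the change-of-variables Jacobian acts identically on every term. Combining (i) and (ii), for a fixed mode $i$ the state PDF obtained by propagating $\varsigma(k)$ through the map $x\mapsto A_{i}x$ is again an MoG: each component $\mathcal{N}(A_{j_{k}}^{*}\mu_{j_{0}},A_{j_{k}}^{*}\Sigma_{j_{0}}A_{j_{k}}^{*\top})$ is replaced by $\mathcal{N}(A_{i}A_{j_{k}}^{*}\mu_{j_{0}},(A_{i}A_{j_{k}}^{*})\Sigma_{j_{0}}(A_{i}A_{j_{k}}^{*})^{\top})$, while the mixture weights are unchanged.

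It then remains to average over the mode active on the transition from time $k$ to $k+1$. Conditioned on that mode equalling $i$ --- an event of probability $\pi_{i}(k+1)$ --- the state PDF at time $k+1$ is exactly the propagated MoG just described. Applying Lemma \ref{lemma:2.1}, with the conditioned states playing the role of the $X_{j}$ and the probabilities $\pi_{i}(k+1)$ playing the role of the weights $\alpha_{j}$, gives $\varsigma(k+1)=\sum_{i=1}^{m}\pi_{i}(k+1)\,\big(\text{propagation of }\varsigma(k)\text{ through }A_{i}\big)$. Substituting the inductive expression for $\varsigma(k)$, renaming the new index $i$ as $j_{k+1}$, and using $A_{j_{k+1}}A_{j_{k}}^{*}=A_{j_{k+1}}A_{j_{k}}\cdots A_{j_{1}}=A_{j_{k+1}}^{*}$ together with $\pi_{j_{k+1}}(k+1)\prod_{r=1}^{k}\pi_{j_{r}}(r)=\prod_{r=1}^{k+1}\pi_{j_{r}}(r)$ yields \eqref{dtSJLSstatePDF} with $k$ replaced by $k+1$, closing the induction.

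The genuinely routine parts are the index bookkeeping and the transpose in the covariance update. The one step deserving care is the probabilistic averaging, i.e., recognizing that the cumulative weight of the branch indexed by the mode history $(j_{1},\dots,j_{k})$ is the product $\prod_{r=1}^{k}\pi_{j_{r}}(r)$; this is precisely where Lemma \ref{lemma:2.1} is used --- once per time step, so that one factor $\pi_{j_{r}}(r)$ is peeled off at each stage via the law of total probability --- and it is the only point at which the argument touches the switching mechanism, everything else being a direct substitution into the inductive hypothesis.
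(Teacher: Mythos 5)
Your proof is correct and follows essentially the same route as the paper: one-step propagation of the MoG through each mode map (linear pushforward of a Gaussian/MoG), followed by mixing over modes with weights $\pi_{j}(\cdot)$ via Lemma \ref{lemma:2.1}, iterated in time. The only difference is presentational --- you package the recursion as a formal induction with an explicit base case, whereas the paper writes out the $k=1$ and $k=2$ steps and concludes by ``continuing likewise.''
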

\begin{proof}
Starting from $\varsigma_{0}$ at $k=0$, the modal PDF at time $k=1$, is given by
{\small
\begin{align}
\varsigma_{j}(1) &= \displaystyle\sum_{j_{0}=1}^{m_{0}}\alpha_{j_{0}} \:\mathcal{N}\left(A_{j}\mu_{j_{0}}, A_{j}\Sigma_{j_{0}}A_{j}^{\top}\right), \; j=1,\cdots ,m, \nonumber
\end{align}}
which follows from the fact that linear transformation of an MoG is an equal component MoG with linearly transformed component means and congruently transformed component covariances (see Theorem 6 and Corollary 7 in \cite{ali2008convergence}). From Lemma \ref{lemma:2.1}, it follows that the state PDF at $k=1$, is
{\small
\begin{eqnarray}
\varsigma(1) = \displaystyle\sum_{j_{1}=1}^{m}\displaystyle\sum_{j_{0}=1}^{m_{0}} \pi_{j_{1}}(1) \alpha_{j_{0}} \:\mathcal{N}\left(A_{j_{1}}\mu_{j_{0}}, A_{j_{1}}\Sigma_{j_{0}}A_{j_{1}}^{\top}\right),
\label{dtSJSPDFatTime1}
\end{eqnarray}}
where $\pi_{j_{1}}(1)$ is the occupation probability for mode $j_{1}$ at time $k=1$. Notice that (\ref{dtSJSPDFatTime1}) is an MoG with $m m_{0}$ component Gaussians. Proceeding likewise from this $\varsigma(1)$, we obtain
{\small
\begin{align}
&\varsigma_{j}(2) = \displaystyle\sum_{j_{1}=1}^{m}\displaystyle\sum_{j_{0}=1}^{m_{0}}\pi_{j_{1}}(1) \alpha_{j_{0}} \:\mathcal{N}\big((A_{j}A_{j_{1}})\mu_{j_{0}},\nonumber\\
&\qquad\qquad\qquad\qquad (A_{j}A_{j_{1}})\Sigma_{j_{0}}(A_{j}A_{j_{1}})^{\top}\big), \quad j=1,\hdots,m,\nonumber\\
&\varsigma(2)=\displaystyle\sum_{j_{2}=1}^{m}\displaystyle\sum_{j_{1}=1}^{m}\displaystyle\sum_{j_{0}=1}^{m_{0}}\pi_{j_{2}}(2)\pi_{j_{1}}(1) \alpha_{j_{0}} \:\mathcal{N}\big((A_{j_{2}}A_{j_{1}})\mu_{j_{0}}, \nonumber\\
&\qquad\qquad\qquad\qquad\qquad(A_{j_{2}}A_{j_{1}})\Sigma_{j_{0}}(A_{j_{2}}A_{j_{1}})^{\top}\big).
\label{ModalStatePDFatTime2}
\end{align}}
Continuing with this recursion till time $k$, we arrive at (\ref{dtSJLSstatePDF}), which is an MoG with $m^{k}m_{0}$ components. 
\end{proof}

The next Lemma computes the mean and covariance of \emph{any mixture PDF}, in terms of the means and covariances of its component PDFs.

\begin{lemma} \label{lemma:4.2}(\textbf{Mean and covariance of a mixture PDF})
Consider any $q$-component mixture PDF $\varsigma(x) = \displaystyle\sum_{j=1}^{q} \beta_{j} \varsigma_{j}(x)$, with $\displaystyle\sum_{j=1}^{q} \beta_{j} = 1$, that has component mean-covariance pairs $\left(\mu_{j},\Sigma_{j}\right)$, $j=1,\hdots,q$. Then, the mean-covariance pair $\left(\widehat{\mu},\widehat{\Sigma}\right)$ for the mixture PDF $\varsigma(x)$, is given by
\small{
\begin{align}
&\widehat{\mu} = \sum_{j=1}^{q}\beta_{j} \mu_{j},\quad
\widehat{\Sigma} =\sum_{j=1}^{q} \beta_{j}\left(\Sigma_{j} + \left(\mu_{j}-\widehat{\mu}\right)\left(\mu_{j}-\widehat{\mu}\right)^{\top}\right).
\label{MeanCovHat}
\end{align}}
\label{MeanCovMixPDF}
\end{lemma}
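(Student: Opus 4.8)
The plan is to compute the two moments directly from their integral definitions, exploiting linearity of the integral and the finiteness of the mixture sum so that summation and integration commute without any subtlety. For the mean, I would write $\widehat{\mu} = \int_{\mathbb{R}^n} x\,\varsigma(x)\,dx = \sum_{j=1}^{q}\beta_j \int_{\mathbb{R}^n} x\,\varsigma_j(x)\,dx = \sum_{j=1}^{q}\beta_j \mu_j$, which is immediate once the definition $\varsigma = \sum_j \beta_j \varsigma_j$ is substituted and the fact $\int x\,\varsigma_j(x)\,dx = \mu_j$ is invoked. Note also that $\sum_j \beta_j = 1$ guarantees $\varsigma$ is itself a PDF, so $\widehat{\mu}$ is well defined.

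For the covariance, I would start from $\widehat{\Sigma} = \int_{\mathbb{R}^n} (x-\widehat{\mu})(x-\widehat{\mu})^{\top}\varsigma(x)\,dx$, substitute the mixture form, and pull the sum outside to get $\widehat{\Sigma} = \sum_{j=1}^{q}\beta_j \int_{\mathbb{R}^n} (x-\widehat{\mu})(x-\widehat{\mu})^{\top}\varsigma_j(x)\,dx$. The key algebraic step is the decomposition $x-\widehat{\mu} = (x-\mu_j) + (\mu_j - \widehat{\mu})$; expanding the outer product yields four terms, of which the two cross terms involve $\int_{\mathbb{R}^n}(x-\mu_j)\,\varsigma_j(x)\,dx = 0$ and hence vanish, while the remaining terms give $\int (x-\mu_j)(x-\mu_j)^{\top}\varsigma_j(x)\,dx = \Sigma_j$ and $(\mu_j-\widehat{\mu})(\mu_j-\widehat{\mu})^{\top}\int \varsigma_j(x)\,dx = (\mu_j-\widehat{\mu})(\mu_j-\widehat{\mu})^{\top}$. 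Collecting these produces exactly $\widehat{\Sigma} = \sum_{j=1}^{q}\beta_j\left(\Sigma_j + (\mu_j-\widehat{\mu})(\mu_j-\widehat{\mu})^{\top}\right)$.

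There is essentially no hard part here: the only things to be careful about are (i) justifying the interchange of the finite sum with the integral, which is trivial, and (ii) correctly tracking that both cross terms in the expansion of $(x-\mu_j+\mu_j-\widehat{\mu})(x-\mu_j+\mu_j-\widehat{\mu})^{\top}$ drop out because each carries a factor $\int(x-\mu_j)\varsigma_j(x)\,dx$ (once as a column, once transposed as a row). If one wanted to be pedantic about existence of the moments, one would note that each $\Sigma_j$ finite together with $\sum_j\beta_j=1$ makes the mixture's second moment finite, but this is not needed for the stated identity.
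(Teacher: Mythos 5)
Your proof is correct and follows essentially the same route as the paper: compute both moments directly from the integral definitions, swap the finite sum with the integral, and use an add-and-subtract decomposition to expand the covariance. The only cosmetic difference is that you center at $\mu_j$ inside each component integral (so the cross terms vanish via $\int (x-\mu_j)\varsigma_j(x)\,dx = 0$), whereas the paper centers at $\widehat{\mu}$ and works through $\mathbb{E}[xx^{\top}]-\widehat{\mu}\widehat{\mu}^{\top}$; both yield \eqref{MeanCovHat} identically.
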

\begin{proof}
By definition, mean vector of the mixture PDF is
\small
\begin{align*}
\widehat{\mu} &\triangleq \displaystyle\int_{\mathbb{R}^{n}} x \varsigma(x) dx
 = \displaystyle\sum_{j=1}^{q} \beta_{j} \displaystyle\int_{\mathbb{R}^{n}} x \varsigma_{j}(x) dx
 = \displaystyle\sum_{j=1}^{q}\beta_{j} \mu_{j}.
\end{align*}\normalsize
Next, covariance matrix of the mixture PDF is
\small
\begin{align*}
\widehat{\Sigma} &\triangleq \mathbb{E}\left[\left(x-\widehat{\mu}\right) \left(x-\widehat{\mu}\right)^{\top}\right] = \mathbb{E}\left[x x^{\top}\right] - \widehat{\mu}\widehat{\mu}^{\top}\\
&= \displaystyle\sum_{j=1}^{q} \beta_{j} \displaystyle\int_{\mathbb{R}^{n}} \left(x - \widehat{\mu} + \widehat{\mu}\right) \left(x - \widehat{\mu} + \widehat{\mu}\right)^{\top} \varsigma_{j}\left(x\right) dx - \widehat{\mu}\widehat{\mu}^{\top}\\
&=\displaystyle\sum_{j=1}^{q}\beta_{j}\left(\Sigma_{j}+\left(\mu_{j}-\widehat{\mu}\right)\left(\mu_{j}-\widehat{\mu}\right)^{\top}\right).
\end{align*}\normalsize
\end{proof}

From Proposition \ref{SJLSpdfAtAnyTimeWithInitialMOG}, starting with an MoG, the joint PDF for a stochastic jump linear system at any time, is another MoG, which may have higher moments other than mean and covariance. 
However, Theorem \ref{theorem:2.3} stated below, somewhat counter-intuitively shows that given an MoG joint PDF, one can construct a ``synthetic" Gaussian PDF using Lemma \ref{lemma:4.2}, such that the given MoG and the synthetic Gaussian are equidistant from Dirac PDF, as measured in the Wasserstein metric. Hence, in order to assess m.s. convergence for stochastic jump linear systems, we can use the distance $W\left(\mathcal{N}(\widehat{\mu}(k),\widehat{\Sigma}(k)),\delta\right)$ as a ``proxy" for the distance $W\left(\varsigma\left(k\right), \delta\right)$. As we will see in Section III, this will facilitate our stability analysis since computing $W\left(\mathcal{N}(\widehat{\mu}(k),\widehat{\Sigma}(k)),\delta\right)$ is much easier (Proposition \ref{prop:Gauss_to_Dirac}) than computing $W\left(\varsigma\left(k\right), \delta\right)$, as the number of Gaussian components in $\varsigma\left(k\right)$ increases exponentially with time (Proposition \ref{SJLSpdfAtAnyTimeWithInitialMOG}).

\begin{theorem}\label{theorem:2.3} (\textbf{MoG state PDF and a synthetic Gaussian are equidistant from Dirac})
Starting from an initial MoG joint PDF {\small$\varsigma_{0}\left(k\right) = \displaystyle\sum_{j_{0}=1}^{m_{0}} \alpha_{j_{0}} \mathcal{N}\left(\mu_{j_{0}}, \Sigma_{j_{0}}\right)$}, let $\varsigma(k)$ be the joint state PDF at time $k$, for stochastic jump linear systems with arbitrary switching probability $\pi(k)$. Further, let the mean and covariance for $\varsigma\left(k\right)$, be
denoted as $\widehat{\mu}(k)$ and $\widehat{\Sigma}(k)$, respectively. Let {\small$W(k)\triangleq W\left(\varsigma(k),\delta(x)\right)$}, and {\small$\widehat{W}(k)\triangleq W\left(\mathcal{N}\left(\widehat{\mu}(k),\widehat{\Sigma}(k)\right),\delta(x)\right)$}. Then
\begin{equation}
W^2(k) = \widehat{W}^2(k) = \vect(I_n)^{\top}\Gamma(k)\vect(\widehat{\mu}(0)\widehat{\mu}(0)^{\top}+\widehat{\Sigma}(0)),\label{eqn:W^2(k)}
\end{equation}
where $I_n$ denotes the $n\times n$ identity matrix. Further, {\small$
\widehat{\mu}(0) = \displaystyle\sum_{j_{0}=1}^{m_{0}} \alpha_{j_{0}} \mu_{j_{0}}$, $
\widehat{\Sigma}(0) = \displaystyle\sum_{j=1}^{m_{0}}\alpha_{j_{0}}\left(\Sigma_{j_{0}}+\right.$ $\left. \left(\mu_{j_{0}}-\widehat{\mu}(0)\right)\left(\mu_{j_{0}}-\widehat{\mu}(0)\right)^{\top}\right)$}
are the mean and covariance of $\varsigma_{0}$, respectively. The matrix $\Gamma(k)$ is defined as \small{$\displaystyle \Gamma(k)\triangleq\prod_{i=k}^{1}\left(\sum_{j=1}^{m}\pi_j(i)\left(A_j\otimes A_j\right)\right)$}\normalsize, which is the product of matrices in reverse order w.r.t. time.

The proof is given in Appendix and more details about the practicality of Theorem \ref{theorem:2.3} can be found in \cite{lee2013robustness}.
\end{theorem}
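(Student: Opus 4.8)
The plan is to collapse the whole statement to a linear recursion for the raw second-moment matrix of the joint state PDF. First I would record the key reduction: exactly as in the proof of Proposition~\ref{m.s.stable}, the second marginal $\delta(x)$ forces the coupling in \eqref{Wassdefn} to be trivial, so for \emph{any} PDF $\varsigma$ one has $W^{2}(\varsigma,\delta)=\mathbb{E}_{\varsigma}[\|X\|_{\ell_{2}}^{2}]=\tr\,\mathbb{E}_{\varsigma}[XX^{\top}]$. Write $M(k)\triangleq\mathbb{E}[X_{k}X_{k}^{\top}]$ for $X_{k}\sim\varsigma(k)$. Applying Lemma~\ref{lemma:4.2} to the MoG $\varsigma(k)$ furnishes its mean $\widehat\mu(k)$ and covariance $\widehat\Sigma(k)$, and a one-line rearrangement of \eqref{MeanCovHat} (the cross terms $(\mu_{j}-\widehat\mu)(\mu_{j}-\widehat\mu)^{\top}$ recombine with $\widehat\mu\widehat\mu^{\top}$) gives $M(k)=\widehat\mu(k)\widehat\mu(k)^{\top}+\widehat\Sigma(k)$, hence $W^{2}(k)=\tr\,M(k)$. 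The synthetic Gaussian $\mathcal{N}(\widehat\mu(k),\widehat\Sigma(k))$ shares these two moments, so by Proposition~\ref{prop:Gauss_to_Dirac} $\widehat W^{2}(k)=\|\widehat\mu(k)\|^{2}+\tr\,\widehat\Sigma(k)=\tr\,M(k)$. Thus $W^{2}(k)=\widehat W^{2}(k)$ already; the ``equidistant'' claim is nothing but the observation that $W(\cdot,\delta)$ sees only the first two moments, and it remains to put $\tr\,M(k)$ in closed form.

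Next I would establish the recursion $M(k)=\sum_{j=1}^{m}\pi_{j}(k)\,A_{j}M(k-1)A_{j}^{\top}$, with $M(0)=\widehat\mu(0)\widehat\mu(0)^{\top}+\widehat\Sigma(0)$ (Lemma~\ref{lemma:4.2} applied to $\varsigma_{0}$, which also gives the displayed formulas for $\widehat\mu(0),\widehat\Sigma(0)$). From the explicit MoG in Proposition~\ref{SJLSpdfAtAnyTimeWithInitialMOG}, the $(j_{0},\dots,j_{k})$ component of $\varsigma(k)$ has weight $\big(\prod_{r=1}^{k}\pi_{j_{r}}(r)\big)\alpha_{j_{0}}$, mean $A_{j_{k}}^{*}\mu_{j_{0}}$ and covariance $A_{j_{k}}^{*}\Sigma_{j_{0}}A_{j_{k}}^{*\top}$. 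Since (by the same rearrangement used above) the raw second moment of a mixture is the convex combination of the component raw second moments, $M(k)=\sum_{j_{0},\dots,j_{k}}\big(\prod_{r=1}^{k}\pi_{j_{r}}(r)\big)\alpha_{j_{0}}\,A_{j_{k}}^{*}(\Sigma_{j_{0}}+\mu_{j_{0}}\mu_{j_{0}}^{\top})A_{j_{k}}^{*\top}$; peeling the outermost index $j_{k}$ out of the nested sums and using $A_{j_{k}}^{*}=A_{j_{k}}A_{j_{k-1}}^{*}$, the bracketed remainder is precisely the same expression at time $k-1$, i.e.\ $M(k-1)$, which is the recursion. (Alternatively one may note that the switching law makes $\sigma_{k}$ independent of $X_{k-1}$ and write $M(k)=\mathbb{E}[A_{\sigma_{k}}X_{k-1}X_{k-1}^{\top}A_{\sigma_{k}}^{\top}]$ directly, but I prefer the first route, which stays inside the paper's lemmas.)

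Finally I would vectorize: using $\vect(A_{j}MA_{j}^{\top})=(A_{j}\otimes A_{j})\vect(M)$, the recursion becomes $\vect(M(k))=\big(\sum_{j}\pi_{j}(k)(A_{j}\otimes A_{j})\big)\vect(M(k-1))$, and iterating from $k$ down to $1$ telescopes to $\vect(M(k))=\Gamma(k)\vect(M(0))$ with $\Gamma(k)$ as defined in the statement. Since $\tr\,M(k)=\vect(I_{n})^{\top}\vect(M(k))$ and $M(0)=\widehat\mu(0)\widehat\mu(0)^{\top}+\widehat\Sigma(0)$, equation~\eqref{eqn:W^2(k)} drops out.

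The hard part is the bookkeeping in the second step: extracting the single index $j_{k}$ from the $(k{+}1)$-fold nested summation in \eqref{dtSJLSstatePDF} and identifying the leftover as $M(k-1)$, together with the elementary but essential identity $M=\widehat\mu\widehat\mu^{\top}+\widehat\Sigma=\sum_{j}\beta_{j}(\Sigma_{j}+\mu_{j}\mu_{j}^{\top})$ that lets the component second moments add linearly. Everything downstream — the Kronecker-product vectorization and the telescoping product defining $\Gamma(k)$ — is routine.
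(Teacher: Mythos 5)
Your proposal is correct and follows essentially the same architecture as the paper's appendix proof: reduce $W(\cdot,\delta)$ to the trace of the raw second moment $M(k)=\widehat{\mu}(k)\widehat{\mu}(k)^{\top}+\widehat{\Sigma}(k)$ (which immediately gives $W^{2}(k)=\widehat{W}^{2}(k)$), derive the linear recursion $M(k)=\sum_{j}\pi_{j}(k)A_{j}M(k-1)A_{j}^{\top}$, and vectorize via $\vect(A_{j}MA_{j}^{\top})=(A_{j}\otimes A_{j})\vect(M)$ to telescope into $\Gamma(k)$. The one place you diverge is the justification of the recursion: the paper obtains it by asserting that the mixture components at time $k$ arise from propagating the single synthetic Gaussian $\mathcal{N}(\widehat{\mu}(k-1),\widehat{\Sigma}(k-1))$ through each mode, a shortcut that is only valid because $W(\cdot,\delta)$ depends solely on the raw second moment (the true PDF at time $k-1$ is an MoG with $m^{k-1}m_{0}$ components, not one Gaussian), whereas you derive the same recursion by peeling the outermost index off the explicit nested sum of Proposition~\ref{SJLSpdfAtAnyTimeWithInitialMOG}. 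Your route is slightly more careful on that step and makes the moment bookkeeping explicit, at the cost of handling the $(k{+}1)$-fold summation; the paper's route is shorter but leaves the synthetic-Gaussian substitution implicit. Either way the argument is sound and yields \eqref{eqn:W^2(k)}.
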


\section{Main Results}
\subsection{Mean square stability for i.i.d. jump linear systems}
Suppose that $\{\sigma_k\}$ is generated by an i.i.d. process with probability distribution $\{\pi_1, \pi_2,\cdots,\pi_m\}$ over the set $\{1, 2, \hdots, m\}$. In Corollary 2.7 of \cite{fang2002stochastic}, the necessary and sufficient condition for m.s. stability of an i.i.d. jump linear system is given by that the matrix
\begin{eqnarray}
A \triangleq \sum_{j=1}^{m}\pi_j\left(A_j \otimes A_j\right) & = \pi_1 (A_1\otimes A_1) + \pi_2 (A_2\otimes A_2) +\nonumber\\
& \cdots + \pi_m (A_m\otimes A_m)\label{eqn:iid_Amatrix}
\end{eqnarray}
is Schur stable. We next recover this result from the Wasserstein distance perspective.
\begin{theorem}
Consider an i.i.d. jump linear system, where $\pi\left(k\right)$ is a stationary probability vector $\{\pi_1, \pi_2,\cdots,\pi_m\}$ for all $k$. The i.i.d. jump linear system is m.s. stable iff the matrix $A \triangleq \displaystyle\sum_{j=1}^{m}\pi_j\left(A_j \otimes A_j\right)$ is Schur stable, i.e. $\rho\left(A\right) < 1$.
\end{theorem}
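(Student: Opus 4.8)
The plan is to reduce the stability test to the asymptotic decay of the matrix powers $A^{k}$ by specializing Theorem~\ref{theorem:2.3} to the stationary case, and then to exploit the positivity structure hidden in $A=\sum_{j}\pi_{j}(A_{j}\otimes A_{j})$. First, by Proposition~\ref{m.s.stable}, fixing the Dirac delta as reference PDF makes m.s.\ stability equivalent to $W(k)\to 0$ as $k\to\infty$ for every admissible initial PDF $\varsigma_{0}$. Since the switching sequence is i.i.d., $\pi_{j}(i)=\pi_{j}$ for all $i$, so the matrix $\Gamma(k)$ of Theorem~\ref{theorem:2.3} collapses to $\Gamma(k)=\bigl(\sum_{j=1}^{m}\pi_{j}(A_{j}\otimes A_{j})\bigr)^{k}=A^{k}$, and therefore $W^{2}(k)=\vect(I_{n})^{\top}A^{k}\vect\bigl(\widehat{\mu}(0)\widehat{\mu}(0)^{\top}+\widehat{\Sigma}(0)\bigr)$. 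Here $M_{0}\triangleq\widehat{\mu}(0)\widehat{\mu}(0)^{\top}+\widehat{\Sigma}(0)=\mathbb{E}[x(0)x(0)^{\top}]\succeq 0$ is the initial second-moment matrix, and it can be made equal to \emph{any} prescribed positive semidefinite matrix $P$ by choosing the single-component initial MoG $\varsigma_{0}=\mathcal{N}(0,P)$. It is convenient to introduce the operator $\mathcal{L}(X)\triangleq\sum_{j}\pi_{j}A_{j}XA_{j}^{\top}$, whose matrix in $\vect$ coordinates is exactly $A$, so that $\vect(I_{n})^{\top}A^{k}\vect(P)=\tr\bigl(\mathcal{L}^{k}(P)\bigr)$ because $\vect(I_{n})^{\top}\vect(Y)=\tr(Y)$.

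For sufficiency, if $\rho(A)<1$ then $A^{k}\to 0$, hence $W^{2}(k)\to 0$ for every initial PDF, and the system is m.s.\ stable by the first step. For necessity, suppose the system is m.s.\ stable; then $\tr\bigl(\mathcal{L}^{k}(P)\bigr)=W^{2}(k)\to 0$ for every $P\succeq 0$. Since $\mathcal{L}$ maps the cone of positive semidefinite matrices into itself, $\mathcal{L}^{k}(P)\succeq 0$, and a positive semidefinite matrix whose trace tends to $0$ tends to $0$; hence $\mathcal{L}^{k}(P)\to 0$ for all $P\succeq 0$, and therefore for every symmetric matrix by linearity. Now $A$ leaves the subspace of vectorized symmetric matrices invariant, and restricted there $\mathcal{L}$ is a positive linear operator on the solid, pointed, self-dual cone of positive semidefinite matrices; by the Perron--Frobenius (Krein--Rutman) theorem its spectral radius $r$ is an eigenvalue with an eigenvector $P_{\star}\succeq 0$, $P_{\star}\neq 0$, and a standard conjugation argument (if $\mathcal{L}Z=\lambda Z$ then $\mathcal{L}\bar Z=\bar\lambda\bar Z$ since $\mathcal{L}$ has real coefficients, and the Hermitian/symmetric parts carry eigenvectors of the same modulus) shows that $r=\rho(A)$, i.e.\ the dominant eigenvalue of $A$ is already attained on the symmetric subspace. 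Then $\mathcal{L}^{k}(P_{\star})=r^{k}P_{\star}\to 0$ forces $r=\rho(A)<1$, completing the equivalence.

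The routine portion is the bookkeeping that yields $W^{2}(k)=\vect(I_{n})^{\top}A^{k}\vect(M_{0})$ and the sufficiency direction. The one delicate point, and the main obstacle, is the last part of necessity: decay of $W^{2}(k)$ directly constrains only how $A^{k}$ acts on \emph{symmetric} test matrices $M_{0}\succeq 0$, so one must invoke the cone-preserving (positive-operator) structure of $A=\sum_{j}\pi_{j}(A_{j}\otimes A_{j})$ to guarantee that this information actually controls the full spectral radius $\rho(A)$ on $\mathbb{R}^{n^{2}}$, rather than merely the spectral radius of $A$ restricted to the symmetric subspace; without this observation the argument would be incomplete.
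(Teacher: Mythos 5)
Your argument follows the same skeleton as the paper's proof: specialize $\Gamma(k)$ from Theorem~\ref{theorem:2.3} to the stationary case so that $W^{2}(k)=\vect(I_{n})^{\top}A^{k}\vect(M_{0})$, then translate $W(k)\to 0$ into a spectral condition on $A$ via Proposition~\ref{m.s.stable}. Where you genuinely depart is the necessity direction, and the departure is an improvement. The paper's proof passes directly from ``$\lim_{k}A^{k}=0$ iff $\rho(A)<1$'' to ``$\lim_{k}W^{2}\to 0\Leftrightarrow\rho(A)<1$'', which silently identifies decay of the scalar $\vect(I_{n})^{\top}A^{k}\vect(M_{0})$, tested only against positive semidefinite $M_{0}$, with decay of the full matrix power $A^{k}$ on $\mathbb{R}^{n^{2}}$. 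You correctly isolate this as the delicate step and close it using the cone-preserving structure of $\mathcal{L}(X)=\sum_{j}\pi_{j}A_{j}XA_{j}^{\top}$: trace decay on the positive semidefinite cone forces $\mathcal{L}^{k}\to 0$ on all symmetric matrices, and positivity is what lets you pass from the symmetric subspace to the full spectral radius $\rho(A)$. That is exactly the information the paper's one-line equivalence needs but does not supply.

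The one thin spot in your own write-up is the claim that $\rho(A)$ is attained on the symmetric subspace. The conjugation argument you sketch ($\mathcal{L}Z=\lambda Z\Rightarrow\mathcal{L}\bar{Z}=\bar{\lambda}\bar{Z}$, then take Hermitian parts) yields a symmetric eigenvector only when $\lambda$ is real; it does not by itself rule out a complex eigenvalue of larger modulus living on the antisymmetric (or non-Hermitian) invariant subspace. Two standard repairs: invoke the Perron--Frobenius theorem for completely positive maps (Evans--H\o{}egh-Krohn), which asserts that the spectral radius of such a map on the full matrix space is attained with a positive semidefinite eigenvector; or, more elementarily, use the Cauchy--Schwarz estimate $\|\mathcal{L}^{k}(ab^{\top})\|\le\|\mathcal{L}^{k}(aa^{\top})\|^{1/2}\,\|\mathcal{L}^{k}(bb^{\top})\|^{1/2}$, valid because $\mathcal{L}^{k}$ is again of the form $X\mapsto\sum_{i}c_{i}K_{i}XK_{i}^{\top}$ with $c_{i}\ge 0$. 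Since rank-one matrices $ab^{\top}$ span $\mathbb{R}^{n\times n}$ and you already know $\mathcal{L}^{k}\to 0$ on the cone, the latter gives $A^{k}\to 0$ on all of $\mathbb{R}^{n^{2}}$, hence $\rho(A)<1$, with no appeal to Krein--Rutman at all. With either patch your proof is complete and, on the necessity side, more rigorous than the one in the paper.
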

\begin{proof}
Since the jump stochastic process is i.i.d., the underlying probability vector $\pi\left(k\right)$ that generates the switching sequence $\{\sigma_{k}\}$, is a time-invariant probability vector $\{\pi_1,\pi_2,\cdots,\pi_m\}$. As a consequence, \eqref{eqn:W^2(k)} can be simplified as
$W^2(k) = \vect(I_n)^{\top}(A^k)$ $\vect(\widehat{\mu}(0)\widehat{\mu}(0)^{\top}+\widehat{\Sigma}(0))$, where $A = \left(\sum_{j=1}^{m}\pi_j\left(A_j\otimes A_j\right)\right)$.
However, it is well known that $\displaystyle \lim_{k\rightarrow\infty}A^k = 0$ iff $\rho(A) < 1$. Therefore, $\displaystyle\lim_{k\rightarrow\infty}W^2\rightarrow0 \Leftrightarrow \rho(A) < 1$. In addition, proposition \ref{m.s.stable} tells us that $\displaystyle\lim_{k\rightarrow\infty}W\rightarrow0 \Leftrightarrow \text{m.s. stability}$. Combining these two, we arrive at $\rho(A) < 1 \Leftrightarrow \displaystyle\lim_{k\rightarrow\infty}W\rightarrow0 \Leftrightarrow$ m.s. stability for i.i.d. jump linear system.
\end{proof}

\subsection{Mean square stability for Markov jump linear systems}
Suppose that $\{\sigma_k\}$ is generated by a time-homogeneous Markov chain with probability distribution $\pi\left(k\right) = \{\pi_{1}\left(k\right), \pi_{2}\left(k\right), \hdots, \pi_{m}\left(k\right)\}$, satisfying $\pi\left(k+1\right) = \pi\left(k\right) P$, where $P=(p_{ij})$ is the transition probability matrix of size $m \times m$. It has been shown in \cite{costa1993stability} (see Theorem 1 and 2 therein) that the condition $\rho\left(\diag(A_j\otimes A_j)\left(P^{\top}\otimes I\right)\right)$ $< 1$ is necessary and sufficient for the m.s. stability of the Markov jump linear systems. Also, it turns out \cite{costa1993stability} that the Markov chain for the jump process admits a stationary probability distribution $\pi^{*}$ satisfying $\pi^{*} = \pi^{*}P$, needs to be enforced for the spectral radius condition to imply m.s. stability.

Now we recover this m.s. stability condition in the Wasserstein framework. The following matrix properties are needed for this purpose.
\begin{lemma}\label{lemma:matrix prop.}
For any real matrices $X$, $Y$ $\in\mathbb{R}^{n\times n}$, let the matrices $M\in\mathbb{R}^{n\times n}$ and  $N\in\mathbb{R}^{nm\times nm}$ be of the form
\small
\begin{align*}
M = X_1 + X_2 + \cdots X_m,\:
N=\begin{bmatrix}
X_1 & X_1 & \cdots & X_1\\
X_2 & X_2 & \cdots & X_2\\
\vdots & \vdots & \ddots & \vdots\\
X_m & X_m & \cdots & X_m
\end{bmatrix},
\end{align*}
\normalsize
Then, following properties hold true.
\begin{description}
\item[(a)] $(XY)\otimes I = (X\otimes I)(Y\otimes I)$
\item[(b)] $\lambda_i(M) = \lambda_i\left(N\right),\quad i=1,2,\hdots,n$
\item[(c)] $\displaystyle \lambda_i(\prod_{j=1}^{k}M_j)=\lambda_i(\prod_{j=1}^{k}N_j), \quad i=1,2,\hdots,n$
\end{description}
\end{lemma}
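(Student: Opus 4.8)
The plan is to treat the three parts in the order stated, since (a) is a one-line Kronecker identity, (b) is the substantive spectral fact, and (c) follows from (b) by a short telescoping manipulation. For (a) I would simply invoke the mixed-product property $(A\otimes B)(C\otimes D)=(AC)\otimes(BD)$; with $B=D=I$ this immediately gives $(X\otimes I)(Y\otimes I)=(XY)\otimes I$.

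For (b), the idea is to exploit the fact that every block row of $N$ repeats the same block. Let $U\in\mathbb{R}^{nm\times n}$ be the vertical stack of $X_1,\dots,X_m$, and let $V\in\mathbb{R}^{n\times nm}$ be the horizontal concatenation of $m$ copies of $I$. A block-by-block check shows $N=UV$, whereas $VU=\sum_{j=1}^{m}X_j=M$. The key ingredient is then the standard determinant identity: for $A\in\mathbb{R}^{p\times q}$, $B\in\mathbb{R}^{q\times p}$ with $p\ge q$ one has $\det(\lambda I_p-AB)=\lambda^{p-q}\det(\lambda I_q-BA)$, so $AB$ and $BA$ have the same nonzero eigenvalues with identical algebraic multiplicities. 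Applying this with $A=U$, $B=V$ gives $\det(\lambda I_{nm}-N)=\lambda^{nm-n}\det(\lambda I_{n}-M)$, i.e. the spectrum of $N$ is exactly the spectrum of $M$ padded with $nm-n$ zeros. Ordering eigenvalues by decreasing modulus then yields $\lambda_i(N)=\lambda_i(M)$ for $i=1,\dots,n$, and in particular $\rho(N)=\rho(M)$.

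For (c) I would reuse the same construction for each factor: $N_j=U_jV$, where now only the stack $U_j$ varies with $j$ while $V$ is the same concatenation of identities for every $j$. Then $\prod_{j=1}^{k}N_j=U_1VU_2V\cdots U_kV=(U_1VU_2V\cdots U_{k-1}VU_k)\,V$, and the $AB/BA$ identity applied with $A$ equal to the parenthesized $nm\times n$ factor and $B=V$ shows that $\prod_{j=1}^{k}N_j$ and $BA=(VU_1)(VU_2)\cdots(VU_k)=\prod_{j=1}^{k}M_j$ share the same nonzero spectrum, again with $nm-n$ trailing zeros. This is (c), and (b) is the special case $k=1$.

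The one genuinely content-bearing step — and the place where I would expect a reader to get stuck — is recognizing the rank-$n$ factorization $N=UV$ with $V$ independent of the mode index; once that is in place, everything reduces to bookkeeping with block matrices together with the $AB/BA$ identity. The only other point worth an explicit remark is the indexing convention behind the equalities $\lambda_i(M)=\lambda_i(N)$ and $\lambda_i(\prod_j M_j)=\lambda_i(\prod_j N_j)$: these hold for $i=1,\dots,n$ precisely because the extra $nm-n$ eigenvalues of the larger matrix are all zero, so under the decreasing-modulus ordering the first $n$ eigenvalues of the larger matrix coincide with the eigenvalues of the smaller one.
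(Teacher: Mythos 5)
Your proof is correct, but it takes a genuinely different route from the paper's. For (b) the paper argues directly with eigenvectors: it takes an eigenpair $(\lambda,v)$ of $N$ with $v=[v_1^{\top},\dots,v_m^{\top}]^{\top}$, reads off the block equations $X_i(v_1+\cdots+v_m)=\lambda v_i$, sums them to produce the eigenpair $(\lambda,\,w)$ of $M$ with $w=\sum_i v_i$, and then uses $\mathrm{rank}(N)\le n$ to conclude that the remaining $(m-1)n$ eigenvalues of $N$ vanish. For (c) the paper multiplies the block matrices out explicitly, observes that $N_1N_2$ again has the repeated-column structure with blocks $X_iM_2$, invokes (b), and iterates. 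Your argument instead rests on the single factorization $N=UV$, $M=VU$ with $V=[I\;\cdots\;I]$, combined with the identity $\det(\lambda I_p-AB)=\lambda^{p-q}\det(\lambda I_q-BA)$. This buys you three things: (c) needs no induction, just one regrouping $\prod_j N_j=\bigl(U_1M_2\cdots M_k\bigr)V$ and one application of the identity; algebraic multiplicities are handled automatically by the characteristic polynomials, whereas the paper's eigenvector argument is silent about multiplicity and about the degenerate case $w=v_1+\cdots+v_m=0$ (where one must separately note that $\lambda=0$); and the ``extra eigenvalues are zero'' claim falls out of the factor $\lambda^{nm-n}$ rather than requiring a separate rank argument. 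The paper's version is more elementary in that it avoids the $AB$/$BA$ spectral identity, but yours is tighter. Both agree on (a).
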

\begin{proof}[Proof of (a)] From the mixed-product property of the Kronecker product, it is known that $(A\otimes B)(C\otimes D) = AC\otimes BD$.
Similarly, we have $(X\otimes I)(Y\otimes I) = XY\otimes I$.
\end{proof}
\begin{proof}[Proof of (b)] Let $\lambda$ and $v$ be the eigenvalues and corresponding eigenvectors of block matrix $N$, respectively. Then, $N$ satisfies
\begin{align}
Nv = \lambda v.\label{Mv=lv}
\end{align}
Suppose that the eigenvector $v$ has the form $\displaystyle v = [v_1^{\top}, v_2^{\top}, \hdots, v_m^{\top}]^{\top}$ with $v_i \in \mathbb{R}^{n}$, $i=1,2,\hdots,m$.
Then, we have
\small
\begin{align}
&Nv = \begin{bmatrix}
X_1 & X_1 & \cdots & X_1\\
X_2 & X_2 & \cdots & X_2\\
\vdots & \vdots & \ddots & \vdots\\
X_m & X_m & \cdots & X_m\\
\end{bmatrix}
\begin{bmatrix}
v_1\\
v_2\\
\vdots\\
v_m
\end{bmatrix}
=\begin{bmatrix}
\lambda v_1\\
\lambda v_2\\
\vdots\\
\lambda v_m
\end{bmatrix}\nonumber \\
\nonumber \\
&\Rightarrow
\begin{matrix}
X_i(v_1+v_2+\cdots+v_m) = \lambda v_i,\quad i=1,2,\hdots,m.
\end{matrix}\label{Xv=lv}
\end{align}
\normalsize
By adding all $m$ equations in \eqref{Xv=lv} we obtain
\begin{align}
Mw = \lambda w\label{Nw=lw},
\end{align}
where $w \triangleq (v_1+v_2+\cdots + v_m)\in \mathbb{R}^n$ is the eigenvector of the matrix $M\triangleq (X_1+X_2+\cdots+X_m)\in \mathbb{R}^{n\times n}$.
From \eqref{Mv=lv} and \eqref{Nw=lw}, we know that $M$ and $N$ share same eigenvalues.
Since rank$(N)\leq n$, all the remaining $(m-1)n$ eigenvalues of $N$ should always be zero, i.e., $\lambda_{i}(M) = \lambda_{i}(N)$ for $i=1,2,\cdots,n$ and $\lambda_{i}(N) = 0$ for $i=n+1,n+2,\cdots,mn$.
\end{proof}
\begin{proof}[Proof of (c)] Firstly, we compute product of two matrices as follows.
\small
\begin{align*}
M_1M_2 &= \underbrace{(X_{1}+X_{2}+\cdots +X_{m})}_{M_1}\underbrace{(Y_{1}+Y_{2}+\cdots +Y_{m})}_{M_2}\\
&= (X_{1}M_2+X_{2}M_2+\cdots +X_{m}M_2),\\
N_1N_2&=\begin{bmatrix}
X_{1}M_2 & X_{1}M_2 & \cdots & X_{1}M_2\\
X_{2}M_2 & X_{2}M_2 & \cdots & X_{2}M_2\\
\vdots & \vdots & \ddots & \vdots \\
X_{m}M_2 & X_{m}M_2 & \cdots & X_{m}M_2\\
\end{bmatrix}.
\end{align*}
\normalsize
Then, by Lemma \ref{lemma:matrix prop.}(b) we know that $\lambda_i(M_1M_2) = \lambda_i(N_1N_2)$, $i=1,2,\cdots,n$.
Similarly, we have
\small
\begin{align*}
M_1M_2M_3 &= (X_{1}M_2M_3+X_{2}M_2M_3+\cdots +X_{m}M_2M_3),\\
N_1N_2N_3 &= \begin{bmatrix}
X_{1}M_2M_3 & X_{1}M_2M_3 & \cdots & X_{1}M_2M_3\\
X_{2}M_2M_3 & X_{2}M_2M_3 & \cdots & X_{2}M_2M_3\\
\vdots & \vdots & \ddots & \vdots \\
X_{m}M_2M_3 & X_{m}M_2M_3 & \cdots & X_{m}M_2M_3\\
\end{bmatrix},
\end{align*}
\normalsize
resulting in {\small$\lambda_i(M_1M_2M_3) = \lambda_i(N_1N_2N_3)$, $i=1,2,\hdots,n$}. Proceeding likewise, we get {\small$\displaystyle \lambda_i(\prod_{j=1}^{k}M_j)=\lambda_i(\prod_{j=1}^{k}N_j)$, $i=1,2,\hdots,n$}.
\end{proof}

\begin{theorem}\label{thm:4.2}
Suppose that $\{\sigma_k\}$ is a time-homogeneous finite state Markov chain with transition probability matrix $P$, then $W\rightarrow0$ and hence \eqref{eqn:1} is m.s. stable if and only if
\begin{align*}
\rho\left(\diag(A_j\otimes A_j)(P^{\top}\otimes I)\right) < 1.
\end{align*}
\end{theorem}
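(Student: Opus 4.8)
The plan is to convert the Wasserstein condition into decay of the ordinary second moment and then invoke the coupled second-moment recursion for Markov jumps. \textbf{Step 1 (reduction).} By Proposition~\ref{m.s.stable}, $W(k)\to 0$ if and only if $x(k)\overset{\text{m.s.}}{\longrightarrow}0$, i.e. if and only if \eqref{eqn:1} is m.s. stable; indeed its proof yields the quantitative identity $W^{2}(k)=\mathbb{E}[\|x(k)\|^{2}]$. So it suffices to characterise when $\mathbb{E}[\|x(k)\|^{2}]\to 0$ for every admissible initial condition. Note that Theorem~\ref{theorem:2.3} supplies $W^{2}(k)$ in closed product form only when successive modes are conditionally independent; for a genuine Markov chain the mixture weights are path probabilities rather than products of marginals, so I would instead work with the mode-conditioned moments.

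\textbf{Step 2 (coupled moment recursion).} I would set $Q_{i}(k)\triangleq\mathbb{E}[x(k)x(k)^{\top}\mathbf{1}_{\{\sigma_{k}=i\}}]\succeq 0$. From $x(k+1)=A_{\sigma_{k}}x(k)$ and the Markov property (given $\sigma_{k}=j$, the event $\{\sigma_{k+1}=i\}$ has probability $p_{ji}$ and is independent of $x(k)$) one gets $Q_{i}(k+1)=\sum_{j}p_{ji}A_{j}Q_{j}(k)A_{j}^{\top}$. Vectorising with $\vect(A_{j}MA_{j}^{\top})=(A_{j}\otimes A_{j})\vect(M)$ and stacking $q(k)\triangleq[\vect(Q_{1}(k))^{\top},\dots,\vect(Q_{m}(k))^{\top}]^{\top}$ gives $q(k+1)=\mathcal{B}q(k)$, where $\mathcal{B}\triangleq(P^{\top}\otimes I_{n^{2}})\diag(A_{j}\otimes A_{j})$ and the admissible $q(0)=\pi(0)\otimes\vect(\mathbb{E}[x(0)x(0)^{\top}])$. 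The theorem's matrix $\mathcal{A}\triangleq\diag(A_{j}\otimes A_{j})(P^{\top}\otimes I_{n^{2}})$ equals $XY$ while $\mathcal{B}=YX$ with $X,Y$ square of equal size, so they share a characteristic polynomial and $\rho(\mathcal{A})=\rho(\mathcal{B})$; Lemma~\ref{lemma:matrix prop.}(a) together with the block identities of parts (b)--(c) are the natural tools for this bookkeeping. Finally $\mathbb{E}[\|x(k)\|^{2}]=\sum_{i}\tr Q_{i}(k)=(\mathbf{1}_{m}^{\top}\otimes\vect(I_{n})^{\top})\mathcal{B}^{k}q(0)$.

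\textbf{Step 3 (the two implications).} If $\rho(\mathcal{A})=\rho(\mathcal{B})<1$ then $\mathcal{B}^{k}\to 0$, so $\mathbb{E}[\|x(k)\|^{2}]\to 0$ for every $q(0)$, i.e. \eqref{eqn:1} is m.s. stable and $W\to 0$; this direction is routine. For the converse I would invoke the standing hypothesis (flagged before the theorem, following \cite{costa1993stability}) that the chain admits a strictly positive stationary distribution $\pi^{*}$, taking $\pi(0)=\pi^{*}$ so that $\pi(k)\equiv\pi^{*}$. The map $q\mapsto\mathcal{B}q$ leaves invariant the proper cone $\mathcal{K}=\{[\vect(Q_{1})^{\top},\dots,\vect(Q_{m})^{\top}]^{\top}:Q_{i}\succeq 0\}$, so by Perron--Frobenius theory for cone-preserving maps $\rho(\mathcal{B})$ is an eigenvalue of $\mathcal{B}$ with an eigenvector $q^{\star}=[\vect(Q_{1}^{\star})^{\top},\dots]^{\top}\in\mathcal{K}\setminus\{0\}$, $Q_{i}^{\star}\succeq 0$. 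If $\rho(\mathcal{B})\ge 1$, choose $S_{0}\succeq 0$ with $\pi_{i}^{*}S_{0}\succeq Q_{i}^{\star}$ for all $i$ (possible since $\pi_{i}^{*}>0$); then $q(0)=\pi^{*}\otimes\vect(S_{0})$ is admissible and dominates $q^{\star}$ in $\mathcal{K}$, so monotonicity of $\mathcal{B}$ on $\mathcal{K}$ gives $\mathbb{E}[\|x(k)\|^{2}]=(\mathbf{1}_{m}^{\top}\otimes\vect(I_{n})^{\top})\mathcal{B}^{k}q(0)\ge\rho(\mathcal{B})^{k}\sum_{i}\tr Q_{i}^{\star}\ge\sum_{i}\tr Q_{i}^{\star}>0$, contradicting m.s. stability. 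Hence m.s. stability forces $\rho(\mathcal{A})<1$, which with the first implication of this step and the reduction of Step~1 gives the claimed equivalence.

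\textbf{Expected main obstacle.} The delicate point is the converse: $q(k)$ lives in the enlarged $mn^{2}$-dimensional space, the admissible initial data $\pi^{*}\otimes\vect(S_{0})$ do not span it, and closing that gap needs both the stationary-distribution hypothesis and the cone/Perron--Frobenius (equivalently, Lyapunov-inequality) argument — the ingredient that genuinely goes beyond the $\Gamma(k)$-product framework of Section~II. Everything else is routine Kronecker algebra.
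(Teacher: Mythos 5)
Your argument is essentially correct, but it takes a genuinely different route from the paper's. The paper never leaves the Wasserstein/$\Gamma(k)$ framework of Theorem~\ref{theorem:2.3}: it forms the block matrix $\widetilde{A}(k)$ with rows $\pi_j(k)(A_j\otimes A_j)$, factors it through $\pi(k)=\pi(k-1)P$ as $\diag(A_j\otimes A_j)(P^{\top}\otimes I)^{k}Q(0)$, and then uses Lemma~\ref{lemma:matrix prop.}(b)--(c) to transfer the condition $\rho\left(\diag(A_j\otimes A_j)(P^{\top}\otimes I)\right)<1$ to the convergence of the product $\Gamma(k)=\prod_{i=k}^{1}\sum_{j}\pi_j(i)(A_j\otimes A_j)$, hence of $W(k)$. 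You instead discard $\Gamma(k)$ and run the classical mode-conditioned second-moment recursion $Q_i(k+1)=\sum_j p_{ji}A_jQ_j(k)A_j^{\top}$, identify its vectorised transition matrix $\mathcal{B}=(P^{\top}\otimes I)\diag(A_j\otimes A_j)$ with the theorem's matrix via the $XY$-versus-$YX$ equality of spectra, and close the necessity direction with a Perron--Frobenius cone argument. The divergence is not cosmetic: as you observe in Step~1, the mixture weights $\prod_{r}\pi_{j_r}(r)$ in Proposition~\ref{SJLSpdfAtAnyTimeWithInitialMOG}, and hence the product form of $\Gamma(k)$, encode independent draws from the marginals $\pi(k)$, whereas under a Markov chain the path probability is $\pi_{j_1}(1)p_{j_1j_2}\cdots p_{j_{k-1}j_k}$; so $\vect(I_n)^{\top}\Gamma(k)\vect(\Phi(0))$ is not in general the true $\mathbb{E}[\parallel x(k)\parallel^{2}]$ of the Markov-switched system, while your coupled recursion computes exactly that quantity. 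What the paper's route buys is uniformity --- the same $\Gamma(k)$ object serves the i.i.d., Markov, and general cases --- at the price of this modelling gap; what yours buys is a proof that genuinely tracks the Markovian second moment and a converse that is argued rather than asserted. The only points to tighten are ones you already flag: the Perron--Frobenius step should be applied on the $\mathcal{B}$-invariant subspace of stacked symmetric matrices, where the PSD cone is proper, and the necessity direction does require $\pi_i^{*}>0$ for all $i$ (or some mode-reachability hypothesis), which is precisely the stationary-distribution condition the paper imports from \cite{costa1993stability}.
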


\begin{proof}
Let the time-varying matrix $\widetilde{A}(k)$ be of the form:
\small
\begin{align*}
\begin{matrix}
\widetilde{A}(k) =
\begin{bmatrix}
\pi_1(k)(A_1\otimes A_1)  & \cdots & \pi_1(k)(A_1\otimes A_1)\\
\pi_2(k)(A_2\otimes A_2)  & \cdots & \pi_2(k)(A_2\otimes A_2)\\
\vdots&\ddots&\vdots\\
\pi_m(k)(A_m\otimes A_m)  & \cdots & \pi_m(k)(A_m\otimes A_m)
\end{bmatrix}
\end{matrix}.\nonumber
\end{align*}\normalsize
Then, the matrix $\widetilde{A}(k)$ has the following equivalent form:
\small
\begin{align}
&\widetilde{A}(k) = \diag(A_j\otimes A_j)\left(
\begin{bmatrix}
\pi_1(k)&\cdots&\pi_1(k)\\
\pi_2(k)&\cdots&\pi_2(k)\\
\vdots  &\ddots&\vdots\\
\pi_m(k)&\cdots&\pi_m(k)
\end{bmatrix}
\otimes I\right)\nonumber\\
&= \diag(A_j\otimes A_j)
\left(
[\pi(k)^{\top} \: \cdots \:\pi(k)^{\top}]
\otimes I\right)
\label{eqn:13}\\
&= \diag(A_j\otimes A_j)
\left(P^{\top}
[\pi(k-1)^{\top} \: \cdots \:\pi(k-1)^{\top}]
\otimes I\right)
\label{eqn:rho(k)}\\
&= \diag(A_j\otimes A_j)
\left(
P^{\top}
\otimes I\right)
\underbrace{
\left(
[\pi(k-1)^{\top} \: \cdots \:\pi(k-1)^{\top}]
\otimes I\right)}_{\triangleq Q(k-1)}.\label{eqn:rho(k)_last}
\end{align}\normalsize
From \eqref{eqn:13} to \eqref{eqn:rho(k)} we used a probability distribution update rule in Markov chain $\pi(k)=\pi(k-1)P$, where $P$ is the transition probability matrix.
Also, Lemma \ref{lemma:matrix prop.}(a) was applied to above equations from \eqref{eqn:rho(k)} to \eqref{eqn:rho(k)_last}.

Notice that from \eqref{eqn:13} we have
\small
\begin{align}
\widetilde{A}(k-1)= \diag(A_j\otimes A_j)Q(k-1).\label{eqn:rho(k-1)}
\end{align}
\normalsize
According to \eqref{eqn:rho(k)_last} and \eqref{eqn:rho(k-1)}, we can infer that $\widetilde{A}(k)$ can be expressed in terms of $\pi(0)$ as
\small
\begin{align}
\widetilde{A}(k) = \diag(A_j\otimes A_j)(P^{\top}\otimes I)^kQ(0),
\end{align}\normalsize
where $Q(0)\triangleq\left(
[\pi(0)^{\top} \: \cdots \:\pi(0)^{\top}]
\otimes I\right)$. 
In addition, we define the matrix $\widetilde{\Gamma}(k)$ by the product of matrix $\widetilde{A}$ from time $k$ to $1$ as follows.
\begin{align}
\widetilde{\Gamma}(k) \triangleq \widetilde{A}(k)\widetilde{A}(k-1)\cdots\widetilde{A}(2)\widetilde{A}(1) = \prod_{i=k}^{1}\widetilde{A}(i).\label{eqn:19}
\end{align}
Taking the limit of \eqref{eqn:19}, above equation has the form of
\begin{align}
\lim_{k\rightarrow\infty}\widetilde{\Gamma}(k)
&= \lim_{k\rightarrow\infty}\prod_{i=k}^{1}\diag(A_j\otimes A_j)(P^{\top}\otimes I)^{i}Q(0),\label{eqn:20}
\end{align}

Note that if the finite state time-homogeneous Markov chain has stationary probability distribution, which is $\pi^* = \pi^* P$, then the transition probability matrix $P$ satisfies $P^* = PP^* = P^*P$, where  $P^*\triangleq\displaystyle\lim_{k\rightarrow\infty}P^k$. With the fact that $\displaystyle \lim_{k\rightarrow\infty}(P^{\top}\otimes I)^k = \lim_{k\rightarrow\infty}\left((P^{\top})^k\otimes I^k\right) = {P^*}^{\top}\otimes I$, once $P$ reaches $P^*$, then we have infinite number of multiplications for the term $\diag(A_j\otimes A_j)\left({P^*}^{\top}\otimes I\right)Q(0)$ in \eqref{eqn:20}. Thus, \eqref{eqn:20} becomes
\small
\begin{align}
\lim_{k\rightarrow\infty}\widetilde{\Gamma}(k)
= \lim_{k\rightarrow\infty}\Big(\diag(A_j\otimes A_j)\left({P^*}^{\top}\otimes I\right)Q(0)\Big)^k\cdots\nonumber\\
\cdots\Big(\diag(A_j\otimes A_j)\left({P}^{\top}\otimes I\right)Q(0)\Big).\label{eqn:21}
\end{align}
\normalsize
Furthermore, the first term of the right hand side in \eqref{eqn:21} can be expressed as
\small{
\begin{align}
&\lim_{k\rightarrow\infty}\Big(\diag(A_j\otimes A_j)\left({P^*}^{\top}\otimes I\right)Q(0)\Big)^k\nonumber\\
&= \quad\lim_{k\rightarrow\infty}\Big(\diag(A_j\otimes A_j)\big({(P^*P)}^{\top}\otimes
 I\big)Q(0)\Big)^k\nonumber\\
&\overset{\mathclap{\text{Lemma}\ref{lemma:matrix prop.}(a)}}= \quad\lim_{k\rightarrow\infty}\Big(\diag(A_j\otimes A_j)\left({P}^{\top}\otimes I\right)\big({P^*}^{\top}\otimes
 I\big)Q(0)\Big)^k\nonumber\\
&= \quad\lim_{k\rightarrow\infty}\Big(\diag(A_j\otimes A_j)\left({P}^{\top}\otimes I\right)\Big)^k\Big(\big({P^*}^{\top}\otimes
 I\big)Q(0)\Big)^k.\label{eqn:22}
\end{align}}
\normalsize
Now, according to \eqref{eqn:22}, {\small$\displaystyle \lim_{k\rightarrow\infty}\Big(\diag(A_j\otimes A_j)$ $\left({P}^{\top}\otimes I\right)\Big)^k = 0\Leftrightarrow \displaystyle\lim_{k\rightarrow\infty}\widetilde{\Gamma}(k) = 0$} if and only if {\small$\rho\left(\diag(A_j\otimes A_j)(P^{\top}\otimes I)\right) < 1$}.

Finally, by Lemma \ref{lemma:matrix prop.}(c) we can conclude that {\small$\displaystyle
\lim_{k\rightarrow\infty}\widetilde{\Gamma}(k)=0 \Leftrightarrow \lambda_j(\lim_{k\rightarrow\infty}\widetilde{\Gamma}(k))=0$, $j=1,2,\hdots,mn^2
\overset{\text{Lemma}\ref{lemma:matrix prop.}(c)}{\Longleftrightarrow} \displaystyle \lambda_j(\lim_{k\rightarrow\infty}\Gamma(k))=0$, $j=1,2,\hdots,n^2
\Leftrightarrow \displaystyle\lim_{k\rightarrow\infty}\Gamma(k)=0 \Leftrightarrow \lim_{k\rightarrow\infty}W(k)\rightarrow0$}, where $\Gamma(k)$ is defined in \eqref{eqn:W^2(k)}.
Consequently, by proposition \ref{prop:Gauss_to_Dirac}, jump linear system \eqref{eqn:1} with Markovian jumps is m.s. stable iff {\small$\rho\left(\diag(A_j\otimes A_j)(P^{\top}\otimes I)\right) < 1$}.
\end{proof}

\subsection{Mean square stability for general stochastic jump linear systems}
In this section, we provide a general m.s. stability condition for stochastic jump linear systems, where the jump sequence $\{\sigma_k\}$ forms any arbitrary random sequences.
\begin{theorem}\label{theorem:3.3}
For any arbitrary switching sequence $\{\sigma_k\}$ with switching probability $\pi(k)$, a jump linear system \eqref{eqn:1} is m.s. stable if and only if the matrix $\Gamma(k)$ is convergent to zero matrix, where the matrix $\Gamma(k)$ is given in Theorem \ref{theorem:2.3}.
\end{theorem}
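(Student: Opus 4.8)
The plan is to read the statement off the closed form of Theorem \ref{theorem:2.3} together with Proposition \ref{m.s.stable}. Fixing Dirac as the reference PDF, Proposition \ref{m.s.stable} makes m.s.\ convergence equivalent to $W(k)\to 0$, while \eqref{eqn:W^2(k)} gives $W^{2}(k)=\vect(I_n)^{\top}\Gamma(k)\vect(S_{0})$, where $S_{0}\triangleq\widehat{\mu}(0)\widehat{\mu}(0)^{\top}+\widehat{\Sigma}(0)=\mathbb{E}[x(0)x(0)^{\top}]\succeq 0$ is the initial second-moment matrix, and is the only feature of the initial law that enters these second-moment dynamics. Since every symmetric positive semidefinite $S_{0}$ is realized by some admissible initial distribution (e.g.\ a single zero-mean Gaussian with that covariance, viewed as a one-component MoG), m.s.\ stability of \eqref{eqn:1} is equivalent to $\vect(I_n)^{\top}\Gamma(k)\vect(S_{0})\to 0$ for \emph{every} $S_{0}\succeq 0$. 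The $(\Leftarrow)$ direction of the theorem is then immediate: $\Gamma(k)\to 0$ forces $W^{2}(k)\to 0$ for all $S_{0}$, hence m.s.\ stability.

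For $(\Rightarrow)$, the key structural fact is that $\Gamma(k)$ acts on vectorized matrices through congruence: $(A_j\otimes A_j)\vect(S)=\vect(A_j S A_j^{\top})$, so each factor $\sum_{j}\pi_j(i)(A_j\otimes A_j)$ of $\Gamma(k)$ sends $\vect(S)$ to $\vect$ of a convex combination of congruence transforms of $S$, and therefore $\Gamma(k)\vect(S)=\vect(S_k)$ with $S_k\succeq 0$ whenever $S\succeq 0$. Because $\vect(I_n)^{\top}\vect(S_k)=\tr(S_k)$, the hypothesis says $\tr(S_k)\to 0$; for a sequence of PSD matrices this is equivalent to $S_k\to 0$, since $0\preceq S_k$ implies $\parallel S_k\parallel_{2}\le\tr(S_k)$. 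Hence $\Gamma(k)\vect(S)\to 0$ for every PSD $S$, and by linearity together with the fact that the PSD matrices span the symmetric matrices (polarization), $\Gamma(k)$ converges to zero on the subspace of vectorized symmetric matrices.

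The remaining step — which I expect to be the main obstacle — is to upgrade this to convergence of the full $n^{2}\times n^{2}$ matrix $\Gamma(k)$. Decomposing a general $M\in\mathbb{R}^{n\times n}$ into symmetric and skew-symmetric parts shows that $\Gamma(k)$ is block diagonal with respect to this splitting (congruence preserves both symmetry classes), so only the skew-symmetric block is left. Here I would argue by domination via complete positivity: for real skew-symmetric $M$ the matrix $\mathrm{i}M$ is Hermitian with $-\parallel M\parallel_{2}I\preceq \mathrm{i}M\preceq \parallel M\parallel_{2}I$, and propagating this two-sided bound through the order-preserving completely positive maps $S\mapsto\sum_{j}\pi_j(i)A_j S A_j^{\top}$, combined with $\Gamma(k)\vect(I_n)\to 0$ (already established, since $I_n\succeq 0$), forces the skew-symmetric block to vanish as well. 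Combining the two blocks yields $\Gamma(k)\to 0$, and chaining the equivalences $\Gamma(k)\to 0\Leftrightarrow W(k)\to 0\Leftrightarrow$ m.s.\ stability of \eqref{eqn:1} completes the proof. (If one only wants convergence of $\Gamma(k)$ on the cone of vectorized PSD matrices — all that the second-moment dynamics ever sees — this last step may be omitted.)
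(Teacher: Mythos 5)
Your proof is correct, and for the sufficiency direction it coincides with the paper's: plug $\Gamma(k)\to 0$ into \eqref{eqn:W^2(k)} and invoke Proposition \ref{m.s.stable}. Where you genuinely diverge is in the necessity direction. The paper disposes of it in one sentence --- ``suppose $\Gamma(k)$ is not convergent to zero; then $W$ never reaches zero by \eqref{eqn:W^2(k)}'' --- which, as stated, does not follow: $W^{2}(k)=\vect(I_n)^{\top}\Gamma(k)\vect(\Phi(0))$ is a single linear functional of $\Gamma(k)$ evaluated at one initial second moment, and a priori it could vanish without the full $n^{2}\times n^{2}$ matrix vanishing. You close exactly this gap: you quantify over all admissible initial second moments, use that $\Gamma(k)\vect(S)=\vect(\Phi_k(S))$ for the composed positive maps $\Phi_k$, note that on PSD matrices $\tr(\Phi_k(S))\to 0$ forces $\Phi_k(S)\to 0$ (and in fact the single choice $S=I_n$ suffices, by the domination $S\preceq\|S\|I_n$), extend to all symmetric matrices by polarization, and handle the skew-symmetric block via the Hermitian sandwich $-\|M\|I_n\preceq \mathrm{i}M\preceq\|M\|I_n$ propagated through the positivity-preserving maps. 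Each of these steps checks out, so your argument actually establishes the ``only if'' half as a theorem about the full matrix $\Gamma(k)$, whereas the paper's version is really only a proof that $\Gamma(k)$ must vanish against the particular vector $\vect(\Phi(0))$. The price is length and the implicit assumption (shared with the paper, and standard) that m.s.\ stability is required uniformly over initial distributions; the payoff is that the ``if and only if'' in the statement is genuinely earned. Your closing parenthetical is also the right observation: restricted to the PSD cone, which is all the second-moment recursion ever produces, the heavier skew-symmetric step is unnecessary.
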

\begin{proof}
The sufficiency is obvious because from \eqref{eqn:W^2(k)}, $W^2\rightarrow0 \Rightarrow W\rightarrow0$ implies \eqref{eqn:1} is m.s. stable by Proposition \ref{m.s.stable}, if $\displaystyle \lim_{k\rightarrow\infty}\Gamma(k)=0$. 

For the proof of the necessity, suppose that $\Gamma(k)$ is not convergent to zero as $k\rightarrow \infty$. Then, $W$ never reaches zero by \eqref{eqn:W^2(k)}, which contradicts the m.s. stability.
\end{proof}

\begin{corollary}\label{corollary:3.1}
Suppose that $\{\sigma_k\}$ is an arbitrary switching sequence of the jump linear system \eqref{eqn:1} with the occupation probability $\pi(k)$, satisfying $\pi_{i_k}(k)=1$ and $\pi_{j_k}(k)=0$, $\forall i_k\neq j_k$ for all time $k$. Then, the jump linear system \eqref{eqn:1} is m.s. stable iff there is a finite time $k$ such that
\begin{align}
\parallel A_{i_{k}}A_{i_{k-1}}\cdots A_{i_{2}}A_{i_{1}}\parallel < 1,\label{eqn:corollary}
\end{align}
where $A_{i_j}\in\{A_1, A_2, \hdots, A_m\}$, $\forall j$ and $\parallel \cdot \parallel$ denotes any matrix norm.
\end{corollary}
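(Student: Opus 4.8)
The plan is to read off the matrix $\Gamma(k)$ of Theorem~\ref{theorem:2.3} for this degenerate choice of occupation probabilities, collapse the Kronecker products, and then apply Theorem~\ref{theorem:3.3}. First I would substitute $\pi_{i_k}(k)=1$ and $\pi_{j_k}(k)=0$ for $j_k\neq i_k$ into $\Gamma(k)\triangleq\prod_{i=k}^{1}\bigl(\sum_{j=1}^{m}\pi_j(i)(A_j\otimes A_j)\bigr)$. Each inner sum then collapses to the single surviving term $A_{i_i}\otimes A_{i_i}$, so $\Gamma(k)=\prod_{r=k}^{1}(A_{i_r}\otimes A_{i_r})=(A_{i_k}\otimes A_{i_k})(A_{i_{k-1}}\otimes A_{i_{k-1}})\cdots(A_{i_1}\otimes A_{i_1})$.

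Next I would use the mixed-product property of the Kronecker product — the identity $(XY)\otimes(ZW)=(X\otimes Z)(Y\otimes W)$ underlying Lemma~\ref{lemma:matrix prop.}(a), now with both factors nontrivial — applied repeatedly, to obtain $\Gamma(k)=A_{i_k}^{*}\otimes A_{i_k}^{*}$, where $A_{i_k}^{*}\triangleq A_{i_k}A_{i_{k-1}}\cdots A_{i_2}A_{i_1}$ is exactly the product appearing in \eqref{eqn:corollary}. Since each entry of $B\otimes B$ is a product of two entries of $B$ (so that, e.g., $\|B\otimes B\|_{2}=\|B\|_{2}^{2}$), the sequence $\Gamma(k)$ converges to the zero matrix if and only if $A_{i_k}^{*}\to 0$. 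Combining this with Theorem~\ref{theorem:3.3}, which asserts that \eqref{eqn:1} is m.s.\ stable iff $\Gamma(k)\to 0$, we get: \eqref{eqn:1} is m.s.\ stable iff $A_{i_k}^{*}\to 0$.

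It then remains to identify $A_{i_k}^{*}\to 0$ with the norm condition in \eqref{eqn:corollary}. The ``only if'' direction is immediate: by equivalence of matrix norms on $\mathbb{R}^{n\times n}$, $A_{i_k}^{*}\to 0$ forces $\|A_{i_k}^{*}\|\to 0$ for every matrix norm, hence $\|A_{i_k}^{*}\|<1$ at some finite time $k$. For the ``if'' direction I would work with a submultiplicative matrix norm and argue that once $\|A_{i_{k_0}}^{*}\|<1$ at some finite $k_0$, the associated one-step map on the Kronecker-lifted state becomes a strict contraction on the relevant block, so that $\Gamma(k)$ — equivalently $A_{i_k}^{*}$ — is driven to zero. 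This last implication is the delicate step and the main obstacle: for a product of matrices drawn from a finite set the norm need not be monotone in $k$, so the contraction has to be exploited at the level of $\Gamma(k)$ rather than entry-by-entry along the sequence, and this is the point of the argument that requires the most care.
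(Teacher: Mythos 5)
Your reduction coincides with the paper's: setting $\pi_{i_r}(r)=1$ collapses $\Gamma(k)$ to $\prod_{r=k}^{1}\left(A_{i_r}\otimes A_{i_r}\right)$, the mixed-product property gives $\Gamma(k)=A_{i_k}^{*}\otimes A_{i_k}^{*}$ so that $\Gamma(k)\rightarrow 0$ iff $A_{i_k}^{*}\rightarrow 0$, and Theorem \ref{theorem:3.3} converts this into m.s.\ stability. The paper expresses the same Kronecker collapse through the norm identity $\parallel \prod_{j}A_{i_j}\parallel=\left(\parallel\prod_{j}(A_{i_j}\otimes A_{i_j})\parallel\right)^{1/2}$ rather than through $A_{i_k}^{*}\otimes A_{i_k}^{*}$, but the content is identical. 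Up to this point there is nothing to object to.

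The genuine gap is exactly the step you flag and then leave open: the implication from ``there exists a finite $k_0$ with $\parallel A_{i_{k_0}}^{*}\parallel<1$'' to ``$A_{i_k}^{*}\rightarrow 0$''. You do not supply an argument, and for the statement as literally quantified no argument exists: take $A_1=\tfrac{1}{2}I$, $A_2=2I$, and the sequence $i_1=1$, $i_r=2$ for $r\geq 2$. Then $\parallel A_{i_1}^{*}\parallel<1$ at the finite time $k=1$, yet $A_{i_k}^{*}=2^{k-1}A_1$ diverges and the system is not m.s.\ stable. A sub-unit norm at one instant gives no control over subsequent factors; what does work is a uniform hypothesis in the spirit of Theorem 6 of \cite{lin2009stability} (every length-$k$ product drawn from $\{A_1,\dots,A_m\}$ has norm less than one, whence geometric decay follows from submultiplicativity), or at least a recurring contraction along the given sequence. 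Your closing remark that the contraction ``has to be exploited at the level of $\Gamma(k)$'' does not resolve this, since $\Gamma(k)$ carries exactly the same information as $A_{i_k}^{*}$. For what it is worth, the paper's own proof does not close this gap either: it asserts that $\Gamma(k)$ is ``a contraction mapping'' under the hypothesis $\parallel A_{i_k}\cdots A_{i_1}\parallel<1$ \emph{for all} $k$, silently strengthening the existential quantifier of \eqref{eqn:corollary} to a universal one (and even that strengthened hypothesis forces convergence only if the norms stay bounded away from one). So your instinct about where the difficulty lies is correct, but as written the sufficiency direction remains unproven in your proposal, just as it is in the paper.
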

\begin{proof}
If $\pi(k)$ obeys $\pi_{i_k}(k)=1$ and $\pi_{j_k}(k)=0$, $\forall i_k\neq j_k$ for all $k$, then the matrix $\Gamma(k)$ in \eqref{eqn:W^2(k)} becomes {\small$ \Gamma(k) = \prod_{p=k}^{1}\left(A_{i_p}\otimes A_{i_p}\right)$}. Since {\small$\parallel \prod_{j=k}^{1}A_{i_j} \parallel =$ $\left(\parallel \prod_{j=k}^{1}\left(A_{i_j}\otimes A_{i_j}\right) \parallel\right)^{\frac{1}{2}}$}, it is easily shown that {\small$\parallel \left(A_{i_{k}}\otimes A_{i_{k}}\right) \left(A_{i_{k-1}}\otimes A_{i_{k-1}}\right)\cdots \left(A_{i_{2}}\otimes A_{i_{2}}\right)\left(A_{i_{1}}\otimes A_{i_{1}}\right)\parallel < 1$ $\Leftrightarrow$ $\parallel A_{i_{k}}A_{i_{k-1}}\cdots A_{i_{2}}A_{i_{1}}\parallel < 1$}.
Therefore, $\Gamma(k)$ is a contraction mapping and hence $W\rightarrow0\Leftrightarrow$ m.s. stable, if $\parallel A_{i_{k}}A_{i_{k-1}}\cdots A_{i_{2}}A_{i_{1}}\parallel < 1$, $\forall k$.
The necessity can be proved by contradiction, similarly with the proof in Theorem \ref{theorem:3.3}. 
\end{proof}

Although in \cite{lin2009stability}, the authors addressed a global uniform asymptotic stability, the m.s. stability condition \eqref{eqn:corollary} coincides with the condition in Theorem 6 of \cite{lin2009stability}. In the case that the initial distribution is given by Dirac PDF located at arbitrary $x_0$, i.e., $\varsigma_0 = \delta(x-x_0)$, we can also recover the uniform stability because $\mathbb{E}[\cdot]$ can be obviated in \eqref{m.s.stab}.

\section{Conclusion}
This technical note investigated the m.s. stability for the discrete-time stochastic jump linear systems using an optimal transport. The Wasserstein distance, which defines a metric on the manifold between PDFs provides a unified framework to prove the m.s. stability conditions. Without assuming any structure on the underlying jump process, we presented the general m.s. stability conditions via optimal transport. Already-known stability conditions for i.i.d. or Markov jump linear systems are also recovered from the convergence of the Wasserstein metric.

\appendix
\noindent
\begin{proof}[Proof of Theorem \ref{theorem:2.3}] From (\ref{Wassdefn}) and (\ref{eqn:lemma3.1}), we have
\small
\begin{align}
\nonumber W^{2} &= \displaystyle\int_{\mathbb{R}^{n}} \parallel x \parallel_{\ell_{2}\left(\mathbb{R}^{n}\right)}^{2} \varsigma(x) dx
= \displaystyle\int_{\mathbb{R}^{n}} \parallel x \parallel_{\ell_{2}\left(\mathbb{R}^{n}\right)}^{2} \displaystyle\sum_{j=1}^{m} \pi_{j} \varsigma_{j}(x) dx\\
&= \displaystyle\sum_{j=1}^{m} \pi_{j} \displaystyle\int_{\mathbb{R}^{n}} \parallel x \parallel_{\ell_{2}\left(\mathbb{R}^{n}\right)}^{2} \varsigma_{j}(x) dx
= \displaystyle\sum_{j=1}^{m} \pi_{j} W_{j}^{2},\label{eqn:A.1}
\end{align}
\normalsize
where $W_j \triangleq W\left(\varsigma_j(x),\delta(x)\right)$.

Also, we can compute $\widehat{W}^2\triangleq W^2(\mathcal{N}(\widehat{\mu},\widehat{\Sigma}),\delta(x))$ from the proposition \ref{prop:Gauss_to_Dirac} as follows.
\small
\begin{align}
\widehat{W}^{2} &= \parallel \widehat{\mu}\parallel_{\ell_{2}\left(\mathbb{R}^{n}\right)}^{2} + \text{tr}(\widehat{\Sigma}) \nonumber\\
&\overset{(\ref{MeanCovHat})}{=}
\widehat{\mu}^{\top}\widehat{\mu} + \:\text{tr}\left(\displaystyle\sum_{j=1}^{m}\pi_j(\Sigma_{j} + (\mu_{j}-\widehat{\mu})(\mu_{j}-\widehat{\mu})^{\top}\right).
\label{dtSJLSThmProof}
\end{align}
\normalsize
Since $\tr(\cdot)$ is linear operator and $\displaystyle\sum_{j=1}^{m}\pi_j = 1$, \eqref{dtSJLSThmProof} becomes
\small
\begin{align}
\widehat{W}^{2} &= \widehat{\mu}^{\top}\widehat{\mu}
+ \: \displaystyle\sum_{j=1}^{m}\pi_{j}\text{tr}\left(\Sigma_{j}\right) + \: \text{tr}\left(\displaystyle\sum_{j=1}^{m} \pi_{j}\mu_{j}\mu_{j}^{\top}\right) -  \nonumber\\
&\text{tr}\left(\left(\displaystyle\sum_{j=1}^{m} \pi_{j}\mu_{j}\right)\widehat{\mu}^{\top}\right) -
\text{tr}\left(\widehat{\mu}\left(\displaystyle\sum_{j=1}^{m} \pi_{j}\mu_{j}\right)^{\top}\right) + \text{tr}\left(\widehat{\mu}\widehat{\mu}^{\top}\right).\label{maineq1}
\end{align}
\normalsize

Now, we recall from (\ref{MeanCovHat}) that $\widehat{\mu} = \displaystyle\sum_{j=1}^{m} \pi_{j} \mu_{j}$, and that $\widehat{\mu}^{\top}\widehat{\mu} = \:\text{tr}\left(\widehat{\mu}^{\top}\widehat{\mu}\right) = \:\text{tr}\left(\widehat{\mu}\widehat{\mu}^{\top}\right)$.
Consequently, the first, fourth, fifth and sixth terms in (\ref{maineq1}) cancel out, resulting in
\small
\begin{align}
\widehat{W}^{2} &= \displaystyle\sum_{j=1}^{m}\pi_{j}\text{tr}\left(\Sigma_{j}\right) + \: \displaystyle\sum_{j=1}^{m} \pi_{j} \: \text{tr}\left(\mu_{j}\mu_{j}^{\top}\right) \nonumber\\
&= \displaystyle\sum_{j=1}^{m}\pi_{j} \left(\parallel \mu_{j} \parallel_{\ell_{2}\left(\mathbb{R}^{n}\right)}^{2} + \: \text{tr}\left(\Sigma_{j}\right)\right)\label{eqn:A.4} = \displaystyle\sum_{j=1}^{m}\pi_{j} W_{j}^{2}.
\end{align}
\normalsize
From \eqref{eqn:A.1} and \eqref{eqn:A.4} $W^2 = \widehat{W}^2$ for all $k$. Therefore, we have
\small
\begin{align}
W^{2}(k) &= \widehat{W}^2(k) = \sum_{j=1}^{m}\pi_{j}(k) \left(\parallel \mu_{j}(k) \parallel_{\ell_{2}\left(\mathbb{R}^{n}\right)}^{2} + \: \text{tr}\left(\Sigma_{j}(k)\right)\right)\nonumber\\
&= \tr\left(\sum_{j=1}^{m}\pi_{j}(k)\left(\mu_{j}(k)\mu_{j}(k)^{\top}+\Sigma_{j}(k)\right)\right).\label{eqn:A.6}
\end{align}
\normalsize
Here, $\mu_{j}(k)$ and $\Sigma_{j}(k)$ are mean and covariance of the components of the Gaussian mixture at time $k$, respectively, obtained from the synthetic Gaussian $\mathcal{N}(\widehat{\mu}(k-1), \widehat{\Sigma}(k-1))$ at time $k-1$, according to
\begin{eqnarray}
\mu_{j}(k) = A_{j}\widehat{\mu}(k-1)\label{eqn_mui(k)},\quad
\Sigma_{j}(k) = A_{j}\widehat{\Sigma}(k-1)A_{j}^{\top}\label{eqn_sigmai(k)}.
\end{eqnarray}
Replacing $\mu_j(k)$ and $\Sigma_j(k)$ in \eqref{eqn:A.6} with \eqref{eqn_mui(k)}, we have
\small{
\begin{align}\label{proof_stability}
W^{2}(k) &= \tr\left(\sum_{j=1}^{m}\pi_{j}(k)A_{j}\underbrace{\left(\widehat{\mu}(k-1)\widehat{\mu}(k-1)^{\top}+\widehat{\Sigma}(k-1)\right)}_{\triangleq \Phi(k-1)}A_{j}^{\top}\right)\nonumber\\
&= \tr\left(\left(\sum_{j=1}^{m}\pi_{j}(k)A_{j}^{\top}A_{j}\right)^{\top}\Phi(k-1)\right).
\end{align}
}
\normalsize
Since the trace is invariant under cyclic permutation, the property $\tr(ABC) = \tr(CAB)$ was applied between first and second line of above equations.
Moreover, using the trace property $\tr(X^{\top}Y) = \vect(X)^{\top}\vect(Y)$, \eqref{proof_stability} can be expressed as
\small\begin{align}
W^{2}(k) = \vect\left(\sum_{j=1}^{m}\pi_{j}(k)A_{j}^{\top}I_{n}A_{j}\right)^{\top}
\vect\Big(\Phi(k-1)\Big)\label{eqn_before_vec(abc)},
\end{align}
\normalsize
where $I_{n}$ is $n\times n$ identity matrix.\\
By applying $\vect(ABC) = \left(C^{\top}\otimes A\right)\vect(B)$ to the first term of \eqref{eqn_before_vec(abc)}, we obtain
\small
\begin{align}\label{eqn_W^2(k)}
&W^{2}(k) = \left(\sum_{j=1}^{m}\pi_{j}(k)\left(A_{j}^{\top}\otimes A_{j}^{\top}\right) \vect\left(I_{n}\right)\right)^{\top} \vect\Big(\Phi(k-1)\Big)\nonumber\\
&= \vect\left(I_{n}\right)^{\top}\left(\sum_{j=1}^{m}\pi_{j}(k)\left(A_{j}\otimes A_{j}\right)\right) \vect\Big(\Phi(k-1)\Big).
\end{align}
\normalsize

Recalling \eqref{dtSJLSThmProof}, we have 
{\small
$W^2 = \widehat{W}^2 = \parallel \widehat{\mu}\parallel_{\ell_{2}\left(\mathbb{R}^{n}\right)}^{2} + \text{tr}(\widehat{\Sigma})
= \tr\left(\widehat{\mu}\widehat{\mu}^{\top}+\widehat{\Sigma}\right)
= \tr\left(I_n^{\top}\left(\widehat{\mu}\widehat{\mu}^{\top}+\widehat{\Sigma}\right)\right)$}.
Again, from the trace property $\tr(X^{\top}Y) = \vect(X)^{\top}\vect(Y)$, above equation with time index $k$ further becomes
{\small
\begin{align}
W^2(k)=\widehat{W}^2(k) = \vect(I_n)^{\top}\vect\Big(\Phi(k)\Big),
\end{align}}
where $\Phi(k)\triangleq \widehat{\mu}(k)\widehat{\mu}(k)^{\top} + \widehat{\Sigma}(k)$. Similarly, $W^2$ at $k-1$ becomes,
\small
\begin{align}
W^2(k-1)=\widehat{W}^2(k-1) = \vect(I_n)^{\top}\vect\Big(\Phi(k-1)\Big).\label{eqn:35}
\end{align}
\normalsize
From the recurrence relation between \eqref{eqn_W^2(k)} and \eqref{eqn:35}, finally we conclude that
{\small
\begin{align*}
W^2(k) = \widehat{W}^2(k) &=
\vect(I_n)^{\top}\Gamma(k)\vect\Big(\Phi(0)\Big)\\
&=\vect(I_n)^{\top}\Gamma(k)\vect\left(\widehat{\mu}(0)\widehat{\mu}(0)^{\top}+\widehat{\Sigma}(0)\right),
\end{align*}}
where {\small$\displaystyle \Gamma(k) \triangleq \prod_{i=k}^{1}A(i) = A(k)A(k-1)\cdots A(2)A(1)$} and {\small$\displaystyle A(k)=\sum_{j=1}^{m}\pi_j(k)(A_j\otimes A_j)$}.
\end{proof}






%
%
%
%

\bibliographystyle{unsrt}
\bibliography{W_to_MJLS}
\end{document}